\documentclass[12pt, draftclsnofoot, onecolumn]{IEEEtran}

\usepackage{tikz}
\usetikzlibrary{calc}
\usepackage{cite}
\usepackage{graphicx}
\usepackage{algorithm}
\usepackage{algorithmic}
\usepackage{epsfig}
\usepackage{epstopdf}
\usepackage{amssymb,amsmath}
\usepackage{amsmath}
\usepackage{amsthm}
\usepackage{amsfonts}
\usepackage{subfigure}
\usepackage{psfrag}
\usepackage{rotating}
\usepackage{latexsym}
\usepackage{dsfont}
\usepackage{stmaryrd}
\usepackage{amsthm}
\usepackage{amssymb}
\usepackage{amsmath}

\usepackage[force,almostfull]{textcomp}
\usepackage{fancyvrb}
\usepackage{textcomp}
\usepackage{url}
\usepackage{ifdraft}
\usepackage{url}
\usepackage{multirow}
\usepackage{rotating}
\usepackage{xspace}
\usepackage{array}
\usepackage{multido}
\usepackage{flushend}
 \usepackage{enumitem}
\usepackage{enumerate}
\usepackage[latin1]{inputenc}
\newlength\figureheight 
\newlength\figurewidth 
\usepackage{graphics}
\usepackage{pgfplots}
\pgfplotsset{compat=newest}
\pgfplotsset{plot coordinates/math parser=false}

\usepackage{scalefnt}

\newtheoremstyle{specialcasestyle}{1mm}{1mm}{\upshape}{}{\bfseries\upshape}{.}{0mm}{}
\theoremstyle{specialcasestyle}

\newtheorem{prop}{Proposition}

\newtheorem{rem}{Remark}

\begin{document}

\title{On the Sum of Order Statistics and Applications to Wireless Communication Systems Performances}

\author{
\IEEEauthorblockN{Nadhir Ben Rached, Zdravko Botev, Abla Kammoun, Mohamed-Slim Alouini, and Raul Tempone}\\

\thanks{
A part of this work has been submitted to IEEE International Conference on Acoustics, Speech and Signal Processing (ICASSP 2018).

Nadhir Ben Rached, Abla Kammoun,  Mohamed-Slim Alouini, and Raul Tempone are with King Abdullah University of Science and Technology (KAUST), Thuwal, Makkah Province, Saudi Arabia, Email: \{nadhir.benrached, abla.kammoun, slim.alouini, raul.tempone\}@kaust.edu.sa

Zdravko Botev is with the University of New South Wales (UNSW), Sydney, NSW, Australia, Email: botev@unsw.edu.au
}
}
\date{}
\maketitle
\thispagestyle{empty}
\begin{abstract}
 We consider the problem of evaluating the cumulative distribution function (CDF) of the sum of order statistics, which  serves to compute outage probability (OP) values at the output of generalized selection combining receivers. Generally, closed-form expressions of the CDF of the sum of order statistics are unavailable for many practical distributions. Moreover, the naive Monte Carlo (MC) method requires a substantial computational effort when the probability of interest is sufficiently small. In the region of small OP values, we propose instead two effective variance reduction techniques that yield a reliable estimate of the CDF with small computing cost. The first estimator, which can be viewed as an importance sampling estimator, has bounded relative error under a certain assumption that is shown to hold for most of the challenging distributions. An improvement of this estimator is then proposed for the Pareto and the Weibull cases. The second  is a conditional MC estimator that achieves  the bounded relative error property for the Generalized Gamma case and the logarithmic efficiency in the Log-normal case. Finally, the efficiency of these estimators is compared via various numerical experiments.
\end{abstract}

\begin{IEEEkeywords}
Order statistics, outage probability, generalized selection combining, Monte Carlo, variance reduction techniques, importance sampling, conditional MC.
\end{IEEEkeywords}
\section{Introduction}
Order statistics play an important role in the performance analysis of wireless communication systems over fading channels \cite{Yang:2011:OSW:2132708}. For instance, in the generalized selection combining (GSC) model combined with maximum ratio combining (MRC) diversity technique, the output signal-to-noise-ratio (SNR) is expressed as the partial sum of ordered channel gains, i.e. squares of the amplitudes of the fading channels. More specifically, this scheme selects and combines the $L$ largest SNRs among a total of $N$ diversity branches \cite{895025}. The GSC diversity scheme combined with MRC is then a generalization of MRC and selection combining (SC) diversity techniques. The partial sum of  order statistics is also encountered when GSC is combined with equal gain combining (EGC) diversity technique. In fact, the outage probability (OP) under this model turns out to be equivalent to evaluating the cumulative distribution function (CDF) of the sum of ordered channel amplitudes variates \cite{7328688}. Therefore, from these two examples, it is of major practical interest to evaluate the CDF of the sum of ordered random variables (RVs) as it can serve to compute OP values at the output of GSC diversity receivers combined with either MRC or EGC. 

Closed-form expressions of the CDF of the partial sum of order RVs exist only for particular distributions.  In \cite{5605378}, a unified moment generating function approach has been derived to determine the joint statistics of partial sums of ordered RVs and in particular closed-form expressions have been presented for the exponential RV. A further work on the joint statistics of partial sums of ordered exponential RVs, useful for instance for the analysis of OP of GSC receivers subject to self-interference, has been developed in \cite{5784177}. Based on an equivalent methodology to \cite{5605378}, closed-form results on partial sums of ordered Gamma variates have been developed in \cite{7953495} which in particular applies to OP computation at the output of GSC combined with MRC receivers over the Nakagami fading channel. Further order statistics results in the Nakagami fading model are in \cite{6596692,895025}.

In the particular scheme where all ordered RVs are combined, i.e. this corresponds to the case $L=N$, the CDF of the sum of either channel gains (MRC) or channel amplitudes (EGC) has been extensively studied in the literature. Closed-form expressions of OP at the output of MRC diversity receivers exist for particular fading models such as independent Nakagami-m \cite{6292935} and independent and identically distributed (i.i.d) $\kappa-\mu$ and $\eta-\mu$ \cite{4231253}. Moreover, closed-form approximations have been proposed for sum of Log-normal \cite{1275712,4814351}, Weibull \cite{1665128}, and Rayleigh \cite{1388722} distributions. On the other hand, efficient simulation methods have been also developed for the estimation of the CDF of the sum of RVs such as the Log-normal \cite{7328688,asmussen2016exponential,gulisashvili2016,botev_SLN,Nadhir_SLN} and the Generalied Gamma \cite{7328688}. 

In the general case where $L<N$  and apart from the exponential and Gamma RVs, closed-form expressions of the CDF of partial sums of ordered RVs are out of reach for many challenging distributions and are still open problems. This is for instance the case of the Log-normal RV which models shadowing \cite{Stuber:2001:PMC:368633}  and weak-to-moderate turbulence channels in free space optical communication systems \cite{journals/twc/NavidpourUK07}. The Weibull variate, which has also received an increasing interest and has been shown to fit realistic propagation channels \cite{1512431}, is another example where the CDF of sums of order statistics is not known to possess a closed-form expression. Thus, it is important to propose alternative approaches to compute the CDF of sums of ordered RVs with arbitrary distributions. 

The use of naive Monte Carlo (MC) method can constitute a good alternative to estimate the CDF of partial sums of ordered RVs. However, since for typical wireless communication systems, more attention is accorded to small OP values, i.e. left-tail of the CDF of the sum of ordered RVs, naive MC method is known to be computationally expensive, requiring a substantial amount of samples to yield an accurate estimate of the left-tail of the CDF. This motivates our work in which we aim to propose efficient variance reduction MC techniques that yield  very precise estimate of the CDF of the sum of ordered RVs with small computing cost \cite{opac-b1132466}. The main contributions of our paper are summarized as follows:
\begin{itemize}
\item We provide a universal importance sampling (IS) estimator \cite{opac-b1132466}  and show that it has bounded relative error, a relevant property in the context of rare event simulation, under a mild assumption that is shown to hold for many challenging distributions. A non-exhaustive list includes for instance the Generalized Gamma (and in particular the Gamma and the Weibull distributions), and the $\kappa-\mu$ distributions (which includes the Rice distribution as a particular case). An improvement on the universal IS estimator is proposed for two particular scenarios: the Pareto and the Weibull (with shape parameters between zero and one).   
\item We propose a second estimator based on the use of conditional MC approach and show that it achieves the bounded relative error property for the Generalized Gamma case and the logarithmic efficiency, a weaker property than the bounded relative error, for the Log-normal case. 
\item We identify the regions in which the IS estimators outperform, in term of computational effort measured by the variance of each estimators, the conditional MC estimator and vice versa. Moreover, the smoothness of the conditional MC estimator enables us the further improve its convergence rate  via the use of the Quasi MC method. 
\end{itemize}
The rest of the paper is organized as follows. In Section II, we describe the problem setting and define the main concepts. The universal IS estimator is presented in Section III. In the same section, we present an improved variant of this estimator for the Pareto and the Weibull scenarios. In Section IV, an alternative estimator based on the use of conditional MC is described.  Finally, some selected numerical results are shown in Section V to compare the performances of the proposed estimators. 
\section{Problem Setting}
We consider a sequence of i.i.d RVs $X_1,X_2,\cdots,X_N$ with common probability density function (PDF) $f(\cdot)$. Our objective is to propose efficient MC methods to evaluate the following quantity
\begin{align}\label{op}
\ell=P \left ( \sum_{k=1}^{L}{X^{(k)}} \leq \gamma_{th}\right ),
\end{align}
where $\gamma_{th}$ is the threshold value, $X^{(k)}$ represents the $k^{th}$ order statistic such that $X^{(1)} \geq X^{(2)} \geq \cdots \geq X^{(N)}$, and $L$ is an integer satisfying $1 \leq L \leq N$. 
The above expression of $\ell$ is a useful metric in the performance analysis of wireless communication systems, operating over fading channels. An example of application is that of transmissions between a single-antenna transmitter and an $N$-antennas receiver. Then, the quantity $\sum_{k=1}^{L}{X^{(k)}}$ corresponds to the total SNR when the receiver selects the $L$ best individual SNR reaching each of the diversity branches. In this case, the quantity $\ell$ corresponds to the OP at the output of GSC combined with MRC receivers. In particular, when $L=1$, the expression in (\ref{op}) corresponds to the OP at the output of SC receivers and to the OP at the output of MRC diversity receivers when $L=N$.

Unfortunately, a closed-form expression of $\ell$ is generally out of reach for many challenging distributions including, for instance,  the Log-normal and the Generalized Gamma. An alternative approach to approximate $\ell$ is then through the use of naive MC simulations. However, it is well-known that for small values of $\ell$, which is the case in typical wireless communication systems, the naive MC method is not practical, since it requires a substantial number of simulations to ensure a precise estimate of $\ell$. Variance reduction techniques can deliver a reliable estimate of $\ell$ with fewer number of runs compared to naive MC simulations. Before delving into the core of our paper, it is important to define some performance metrics that serve to measure the efficiency of an unbiased estimator \cite{opac-b1132466,opac-b1123521}. Let $\hat {\ell}$ be an estimator of $\ell$ with $\mathbb{E} [ \hat{\ell}]=\ell$, we say that $\hat{\ell}$ is logarithmic efficient  when
\begin{align}\label{log_eff}
\lim_{\gamma_{th} \rightarrow 0} \frac{\log \left ( \mathbb{E} \left [ \hat{\ell}^2\right ]\right )}{\log \left ( \ell\right )}=2,
\end{align}
or equivalently for all $\epsilon>0$
\begin{align}
\lim_{\gamma_{th} \rightarrow 0}\frac{\mathrm{var} \left [ \hat{\ell}\right ]}{\ell^{2-\epsilon}}=0.
\end{align}
Note that the limit in (\ref{log_eff}) cannot be made larger since $\frac{\log \left ( \mathbb{E} \left [ \hat{\ell}^2\right ]\right )}{\log \left ( \ell\right )}$ is always less than $2$ from Jensen's inequality. A stronger criterion than the logarithmic efficiency is the bounded relative error which holds when
\begin{align}
\limsup_{\gamma_{th} \rightarrow 0}\frac{\mathrm{var} \left [\hat{\ell} \right ]}{\ell^2} < \infty.
\end{align}
Such a property implies that the number of samples needed to achieve a given accuracy remains bounded regardless of how small $\ell$ is. Finally, a further stronger criterion is the asymptotically vanishing relative error property:
\begin{align}
\limsup_{\gamma_{th} \rightarrow 0}\frac{\mathrm{var} \left [\hat{\ell} \right ]}{\ell^2}=0.
\end{align}
When this criterion holds, the number of simulation runs to meet an accuracy requirement gets smaller as $\ell$ decreases.  
\section{Importance Sampling Estimator}
In this section we present our first estimator of $\ell$. Let $\bold{X}=(X_1,\cdots,X_N)'$ and $S=\{\boldsymbol{x}=(x_1,\cdots,x_N)': \sum_{k=1}^{L}{x^{(k)}} \leq \gamma_{th} \}$ and consider another set $S_1$ that includes $S$ with the assumption that $P \left (\bold{X} \in S_1 \right )$ is known in closed form. Then, the probability $\ell$ is re-written as
\begin{align}
\ell= P \left (\bold{X} \in S \right )=P \left (\bold{X} \in S_1 \right ) P \left ( \bold{X} \in S  | \bold{X} \in S_1\right ).   
\end{align}
Hence, an estimator of $\ell$ is given by the use of naive MC simulation to estimate $P \left ( \bold{X} \in S  | \bold{X} \in S_1\right )$. More specifically, from the above expression, we may write $\ell$ as
\begin{align}
\ell= \mathbb{E}_g \left [\ell_1\bold{1}_{\left ( \bold{X} \in S\right )} \right ]\triangleq \mathbb{E}_g \left [ \hat{\ell}_{IS}\right ],
\end{align}
where $g(\cdot)$ is the PDF under which $\bold{X}$ is distributed according to its original PDF truncated over $S_1$, $\ell_1$ is equal to $P \left (\bold{X} \in S_1 \right )$, and $1_{\left (\cdot \right )}$ is the indicator function. It is worth mentioning that $\hat{\ell}_{IS}$ may be viewed as an IS estimator with biasing PDF $g(\cdot)$. 

Now, we discuss how to select $S_1$ in order to achieve a substantial amount of variance reduction. Intuitively, the set $S_1$ has to be selected such that $\ell_1$ is close to $\ell$ since the variance of $\hat{\ell}_{IS}$ is given by
\begin{align}\label{var_is}
\mathrm{var}_g \left [\hat{\ell}_{IS} \right ]=\ell_1 \ell-\ell^2.
\end{align}
Thus, we clearly point out that the closer $\ell_1$ to $\ell$, the smaller the variance of $\hat{\ell}_{IS}$ is, and hence the more efficient is the estimator $\hat{\ell}_{IS}$. In particular, the estimator $\hat {\ell}_{IS}$ has  bounded relative error  when $\ell_1/\ell$ is asymptotically bounded as $\gamma_{th}$ goes to $0$, and has asymptotically vanishing relative error in the case where $\ell_1/\ell$ approaches $1$ as $\gamma_{th}$ goes to $0$. 

In the next subsection, we propose the simplest choice of $S_1$ that has the feature of being applicable to any distribution and prove that the bounded relative error holds under a mild assumption that is valid for most of the challenging distributions. 
\subsection{Universal IS Estimator}
The simplest choice of the set $S_1$ is as follows
\begin{align}
S_1=\{\boldsymbol{x}=(x_1,\cdots,x_N)': x^{(1)} \leq \gamma_{th} \}.
\end{align}
The probability $\ell_1$ is therefore given by
\begin{align}
\ell_1= \left (P \left (X_1 \leq \gamma_{th} \right ) \right )^N
\end{align}
The efficiency of this IS estimator is given in the following proposition
\begin{prop}
\hspace{2mm} For distributions satisfying $P \left (X_1<\gamma_{th} \right )/P \left ( X_1 \leq \gamma_{th}/L\right )=\mathcal{O}(1)$ as $\gamma_{th} \rightarrow 0$, we have
\begin{align}
\limsup_{\gamma_{th} \rightarrow 0}\frac{\ell_1}{\ell} < \infty
\end{align}
Hence, the bounded relative error property holds.
\end{prop}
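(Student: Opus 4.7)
The plan is to sandwich $\ell$ from below by a product of one-dimensional CDF values that differ from $\ell_1$ only by a rescaling of the threshold, so that the hypothesis of the proposition directly controls the ratio $\ell_1/\ell$. Since the variance identity (\ref{var_is}) gives $\mathrm{var}_g[\hat{\ell}_{IS}]/\ell^2 = \ell_1/\ell - 1$, bounded relative error is equivalent to showing $\ell_1/\ell$ stays bounded as $\gamma_{th}\to 0$.

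First I would exploit the elementary implication: if every $X_i \leq \gamma_{th}/L$, then each of the top-$L$ order statistics satisfies $X^{(k)} \leq \gamma_{th}/L$, and therefore $\sum_{k=1}^L X^{(k)} \leq L\cdot(\gamma_{th}/L) = \gamma_{th}$. Using the i.i.d.\ assumption this yields
\begin{align*}
\ell \;\geq\; P\!\left(\max_{1\leq i \leq N} X_i \leq \gamma_{th}/L\right) \;=\; \bigl(P(X_1 \leq \gamma_{th}/L)\bigr)^N.
\end{align*}
Combining this with the closed form $\ell_1 = (P(X_1 \leq \gamma_{th}))^N$ gives
\begin{align*}
\frac{\ell_1}{\ell} \;\leq\; \left(\frac{P(X_1 \leq \gamma_{th})}{P(X_1 \leq \gamma_{th}/L)}\right)^{\!N}.
\end{align*}
Under the stated hypothesis, the bracketed ratio is $\mathcal{O}(1)$ as $\gamma_{th} \to 0$, and raising an $\mathcal{O}(1)$ quantity to the fixed power $N$ keeps it $\mathcal{O}(1)$. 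Hence $\limsup_{\gamma_{th}\to 0} \ell_1/\ell < \infty$, and the bounded relative error property follows from (\ref{var_is}).

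No step seems particularly delicate: the core observation is the pigeonhole-style inequality that a threshold of $\gamma_{th}$ on the sum of the top $L$ variates is implied by a threshold of $\gamma_{th}/L$ on each individual variate, which explains why the hypothesis is phrased in terms of the ratio of the CDF evaluated at $\gamma_{th}$ and at $\gamma_{th}/L$. The only minor point worth verifying is that, for continuous distributions typically encountered in the applications, $P(X_1 < \gamma_{th}) = P(X_1 \leq \gamma_{th})$, so the mixed strict/non-strict form in the hypothesis is unambiguous.
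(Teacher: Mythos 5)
Your proof is correct and follows essentially the same route as the paper: lower-bounding $\ell$ by the probability that all $N$ variates fall below $\gamma_{th}/L$, factoring by independence, and invoking the hypothesis on the ratio of CDF values. The only addition is your (reasonable) remark on the strict versus non-strict inequality in the hypothesis, which the paper leaves implicit.
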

\begin{proof}
Let us first lower bound the probability of interest $\ell$ as follows
\begin{align}
\ell \geq P \left (X_1 \leq \gamma_{th}/L, \cdots, X_N \leq \gamma_{th}/L \right )
\end{align}
Hence, we get
\begin{align}
\frac{\ell_1}{\ell} \leq \frac{\left (P \left (X_1 \leq \gamma_{th} \right ) \right )^N}{P \left (X_1 \leq \gamma_{th}/L, \cdots, X_N \leq \gamma_{th}/L \right )}
\end{align}
\end{proof}
The assumption $P \left (X_1<\gamma_{th} \right )/P \left ( X_1 \leq \gamma_{th}/L\right )=\mathcal{O}(1)$ is not restrictive since it is satisfied by many challenging distributions such that the Generalized Gamma (which includes in particular the Gamma and the Weibull distributions), and the $\kappa-\mu$ distributions, see \cite{7769235}. Moreover, in the independent and not identically distributed scenario, the bounded relative error property holds when the assumption of Proposition 1 is satisfied for each $X_i$, $i=1,\cdots,N$. In particular when $L=N$, this IS estimator, with the assumption in Propostion 1, is the first to achieve the bounded relative error property in the independent and not identically distributed case since, to the best of the authors' knowledge, this property has only been achieved  in the i.i.d setting \cite{7328688}. 

Despite its general scope of applicability, the efficiency of this universal IS estimator can be improved if we settle for a particular distribution.  This is the aim of the two following subsections where we propose other choices of $S_1$ in the Pareto and Weibull cases that improve the efficiency of the universal IS estimator. 
\subsection{Pareto Case}
\subsubsection{The Approach}
The PDF $f(\cdot)$ of $X_i$, $i=1,\cdots,N$, is given as
\begin{align}
f(x)=\alpha \left (1+x \right )^{-(1+\alpha)}, \text{    } x\geq 0,
\end{align}
with $\alpha>0$. It is easy to observe that if we define $Y_i=\alpha \log \left ( 1+X_i\right )$, $i=1,\cdots,N$, then $Y_i$ has an exponential distribution with mean $1$. Using this transformation, $\ell$ is re-written as follows
\begin{align}
\ell=P \left (\sum_{k=1}^{L}{\exp \left (Y^{(k)}/\alpha \right )} \leq \gamma_{th}+L \right ).
\end{align}
Now, we will take advantage of the convexity of the exponential function to construct the set $S_1$. Let $\lambda_i>0$ such that $\sum_{i=1}^{L}{\lambda_i}=1$, then we get
\begin{align}
\nonumber &\sum_{k=1}^{L}{\lambda_k \exp \left (Y^{(k)}/\alpha-\log \left (\lambda_k \right ) \right )} \\
&\geq \exp \left ( \sum_{k=1}^{L}{\lambda_k \left ( Y^{(k)}/\alpha-\log \left ( \lambda_k\right )\right )}\right ).
\end{align}
Hence, the set $S_1$ is selected as 
\begin{align}
\nonumber S_1&= \Big \{\boldsymbol{y}=(y_1,\cdots,y_N)':\sum_{k=1}^{L}{\lambda_k y^{(k)}} \\
&\leq \alpha ( \log(\gamma_{th}+L) +\sum_{k=1}^{L}{\lambda_k \log (\lambda_k)})\Big \}.
\end{align}
The remaining work is to compute $\ell_1$ and to provide a procedure on how to generate samples according to $g(\cdot)$. By denoting  $\gamma_1=\alpha \left (\log(\gamma_{th}+L) +\sum_{k=1}^{L}{\lambda_k \log (\lambda_k)}\right )$ and exploiting the following representation of the order statistics $Y^{(1)},\cdots, Y^{(L)}$, see \cite{opac-b1132466}
\begin{align}\label{transform}
Y^{(k)}=\sum_{j=1}^{N-k+1}{\frac{Z_j}{N-j+1}},
\end{align}
where $Z_1,\cdots,Z_N$ are i.i.d exponential RVs with mean $1$, it follows that $\ell_1$ is given by
\begin{align}
\ell_1=P \left ( \sum_{i=1}^{N}{\beta_i Z_i} \leq \gamma_1\right ),
\end{align}
where 
\begin{align}
\beta_i = \begin{cases}
\sum\limits_{j=1}^{L}{\lambda_j}/(N-i+1) &\text{   }i=1=1,\cdots,N-L+1,\\
\sum\limits_{j=1}^{N+1-i}{\lambda_j}/(N-i+1) &\text{    }i=N-L+2,\cdots,N.
\end{cases}
\end{align}
Hence, $\ell_1$ turns out to be the CDF of the sum of independent exponential RVs. A closed-form expression of $\ell_1$ is as follows, see \cite{Botev:2013:SNR:2466677.2466683},
\begin{align}
\ell_1=1-(1,0,\cdots,0)\exp \left ( \gamma_1 \bold{A} \right ) (1,1,\cdots,1)',
\end{align}
with $\exp \left ( \gamma_1 \bold{A} \right )$ being the matrix exponential of $\gamma_1 \bold{A}$ and 
\begin{align}
\bold{A} = 
 \begin{pmatrix}
  -1/\beta_1 & 1/\beta_1     & 0    & \cdots & 0 \\
    0       & -1/\beta_2   & 1/\beta_2 & \cdots & 0 \\
  \vdots  & \vdots  & \ddots & \ddots &\vdots  \\
    0 & \cdots & 0 & -1/\beta_{N-1} &1/\beta_{N-1} \\
  0 & \cdots & 0 & 0 &-1/\beta_N 
 \end{pmatrix}
\end{align}
Now, we answer the question on how we generate samples truncated over the set $S_1$. To this end, we use the representation (\ref{transform}) and sample  $Z_1, \cdots, Z_N$, which are exponentially distributed with mean $1$, conditional on the event $\{\sum_{i=1}^{N}{\beta_i Z_i} \leq \gamma_1\}$. This can be efficiently performed by letting $T_i=\beta_i Z_i/\gamma_1$ and using a uniform distribution over $\{\sum_{i=1}^{N}{T_i} \leq 1\}$ as acceptance-rejection proposal. The following algorithm provides all steps to sample $Z_1, \cdots, Z_N$ conditional on the event  $\{\sum_{i=1}^{N}{\beta_i Z_i} \leq \gamma_1\} $, or equivalently to sample  $Y^{(1)}, \cdots, Y^{(L)}$, restricted to $S_1$.

\begin{algorithm}[H]
\caption{Samples Truncated over $S_1$}
\begin{algorithmic}[1]\label{Algo1}
\STATE \textbf{Inputs:} $\{\beta_i \}_{i=1}^{N}$, $\gamma_1$.
\STATE \textbf{Outputs:} $\{Y^{(i)}\}_{i=1}^{L}$.
\WHILE {$U > \exp \left (-\gamma_1\sum_{i=1}^{N}{U_i/\beta_i} \right )$} 
\STATE Generate $\{U_i\}_{i=1}^{N}$  from the uniform distribution over the set $\{u_i \geq 0, \sum_{i=1}^{N}{u_i}\leq 1\}$, see \cite[Algorithm 3.23]{opac-b1132466}.
\STATE Generate $U$ a sample from the uniform distribution over $[0,1]$.
\ENDWHILE
\STATE $\bold{T} \leftarrow \bold{U}$
\STATE Set $Z_i \leftarrow(\gamma_1/\beta_i) T_i$.
\STATE Compute $\{Y^{(k)} \}_{k=1}^{L}$ from (\ref{transform}).
\end{algorithmic}
\end{algorithm}

\subsubsection{Efficiency}
We investigate in this part the efficiency of the proposed IS scheme. The main result is in the following proposition.
\begin{prop}
\hspace{2mm} Let $\lambda_k=1/L$ for all $k \in \{1,\cdots,L \}$. Then, we have
\begin{align}
\limsup_{\gamma_{th} \rightarrow 0}{\frac{\ell_1}{\ell}} < \infty.
\end{align}
Thus, the bounded relative error property holds.
\end{prop}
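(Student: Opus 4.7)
The approach is to bound $\ell_1$ from above and $\ell$ from below, showing both behave like $\gamma_{th}^N$ as $\gamma_{th}\to 0$, so that their ratio stays bounded and the bounded relative error property follows via (\ref{var_is}). The key point is that the balanced choice $\lambda_k=1/L$ makes the threshold $\gamma_1$ shrink linearly in $\gamma_{th}$, which is precisely what is needed to match the two bounds in order.

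First I would substitute $\lambda_k=1/L$ into the expression for $\gamma_1$. Since $\sum_{k=1}^L \lambda_k\log\lambda_k = -\log L$, one obtains $\gamma_1 = \alpha\log(1+\gamma_{th}/L)$, which is asymptotic to $\alpha\gamma_{th}/L$ as $\gamma_{th}\to 0$. For the upper bound on $\ell_1$, I would observe that since $Z_i\geq 0$ and $\beta_i>0$, the event $\{\sum_i \beta_i Z_i\leq\gamma_1\}$ is contained in the intersection of the individual events $\{\beta_i Z_i\leq\gamma_1\}$. Using independence of the $Z_i$ and $1-e^{-u}\leq u$, this gives
\begin{align}
\ell_1 \;\leq\; \prod_{i=1}^N\left(1-e^{-\gamma_1/\beta_i}\right) \;\leq\; \frac{\gamma_1^N}{\prod_{i=1}^N \beta_i},
\end{align}
which is $O(\gamma_{th}^N)$.

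For the lower bound on $\ell$, I would mimic the argument already used in the proof of Proposition~1: the event $\{X_i\leq\gamma_{th}/L \text{ for all } i\}$ is contained in $\{\sum_{k=1}^L X^{(k)}\leq\gamma_{th}\}$, so by i.i.d.\ and the Pareto CDF $F(x)=1-(1+x)^{-\alpha}\sim \alpha x$ near zero,
\begin{align}
\ell \;\geq\; \bigl(F(\gamma_{th}/L)\bigr)^N \;\sim\; \bigl(\alpha\gamma_{th}/L\bigr)^N.
\end{align}
Combining the two asymptotics yields $\limsup_{\gamma_{th}\to 0}\ell_1/\ell \leq L^N/\bigl(\alpha^N \prod_i\beta_i\bigr)$, a finite constant depending only on $N$, $L$, and $\alpha$, which is exactly the bounded relative error conclusion. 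No single step is technically deep; the main obstacle is the asymptotic bookkeeping needed to see that the $\lambda_k = 1/L$ choice is the one that makes $\gamma_1 = \Theta(\gamma_{th})$ and therefore makes both bounds scale as $\gamma_{th}^N$ with matching orders.
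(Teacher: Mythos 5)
Your proof is correct and follows essentially the same strategy as the paper: upper-bound $\ell_1$ and lower-bound $\ell$ by quantities that both scale as $\gamma_{th}^N$, the only cosmetic difference being that you bound $\ell_1$ through the spacings representation $\sum_i\beta_i Z_i$ and the termwise inclusion $\{\sum_i\beta_i Z_i\le\gamma_1\}\subseteq\bigcap_i\{\beta_i Z_i\le\gamma_1\}$, whereas the paper uses $\ell_1\le P(Y^{(1)}\le L\gamma_1)=(1-e^{-L\gamma_1})^N$. One minor slip: since $\gamma_1\sim\alpha\gamma_{th}/L$ and $\ell\gtrsim(\alpha\gamma_{th}/L)^N$, your final constant should be $1/\prod_i\beta_i$ rather than $L^N/(\alpha^N\prod_i\beta_i)$, but either way it is finite, so the bounded relative error conclusion stands.
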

\begin{proof}
Let us upper bound $\ell_1$ as follows
\begin{align}
\nonumber \ell_1&=P \left ( \sum_{k=1}^{L}{Y^{(k)}} \leq \alpha L \left [ \log \left (1+\gamma_{th}/L \right )\right ]\right )\\
\nonumber & \leq P \left (Y^{(1)} \leq  \alpha L \left [ \log \left (1+\gamma_{th}/L \right )\right ]\right )\\
&= \left (1-\exp \left ( -\alpha L \log \left ( 1+\gamma_{th}/L\right )\right ) \right )^{N}.
\end{align}
Now, the probability $\ell$ is lower bounded as follows
\begin{align}
\nonumber \ell &=P \left ( \sum_{k=1}^{L}{X^{(k)}} \leq \gamma_{th}\right )\\
\nonumber  &\geq P \left ( X^{(1)} \leq \gamma_{th}/L,\cdots, X^{(L)} \leq \gamma_{th}/L\right )\\
\nonumber &= P \left (X_1 \leq \gamma_{th}/L,\cdots, X_N \leq \gamma_{th}/L \right )\\
&= \left (1-\exp \left (-\alpha \log \left ( 1+\gamma_{th}/L \right )\right ) \right )^{N}.
\end{align}
Therefore, we deduce that 
\begin{align}
\limsup_{\gamma_{th} \rightarrow 0 }{\frac{\ell_1}{\ell}} \leq L^N,
\end{align}
and hence the proof is concluded.
\end{proof}
\subsection{Weibull Case}
\subsubsection{The Approach}
we consider the case where $X_1,\cdots,X_N$ are i.i.d Weibull variates with PDF
\begin{align}
f(x)=\frac{\alpha}{\eta} \left ( \frac{x}{\eta}\right )^{\alpha-1} \exp \left (-\left ( \frac{x}{\eta}\right )^{\alpha} \right ),\text{   } x>0,
\end{align}
where $\eta$ is the scale parameter, $\alpha$ is the shape parameter which is assumed, in this part, to satisfy $0<\alpha<1$. Consider now the RVs $Y_i=\left (X _i/\eta\right )^{\alpha}$, $i=1,\cdots,N$. Then, it easy to show that $Y_i$, $i=1,\cdots,N$ are i.i.d exponential RVs with mean $1$. Hence, $\ell$ is re-expressed as
\begin{align}\label{weib}
\ell=P \left (\sum_{k=1}^{L}{\left (Y^{(k)} \right )^{1/\alpha}} \leq \gamma_{th}/\eta \right ).
\end{align}
Let $\lambda_i>0$, $i=1,\cdots,L$, such that $\sum_{i=1}^{L}{\lambda_i}=1$. Then, using the convexity of $y \rightarrow y^{1/\alpha}$ on the positive axis for $0<\alpha<1$, we get
\begin{align}
\nonumber & \left \{\sum_{k=1}^{L}{\lambda_k\left (Y^{(k)}/\lambda_k^{\alpha} \right )^{1/\alpha}} \leq \gamma_{th}/\eta\right \} \\
&\subseteq \left \{ \left (\sum_{k=1}^{L}{\lambda_k^{1-\alpha}Y^{(k)}} \right )^{1/\alpha} \leq \gamma_{th}/\eta \right \}.
\end{align}
Therefore, $S_1$ is selected as 
\begin{align}
S_1=\left \{ \bold{y}=(y_1,\cdots,y_N)':\sum_{k=1}^{L}{\lambda_k^{1-\alpha}Y^{(k)}} \leq \left ( \gamma_{th}/\eta\right )^{\alpha}\right \}.
\end{align} 
Using the same idea as in the Pareto case, the value of $\ell_1$ is written as
\begin{align}
\ell_1=P \left (\sum_{i=1}^{N}{\nu_iZ_i} \leq \left ( \gamma_{th}/\eta\right )^{\alpha} \right ),
\end{align}
with \begin{align}
\nu_i = \begin{cases}
\sum\limits_{j=1}^{L}{\lambda_j^{1-\alpha}}/(N-i+1) &\text{   }i=1=1,\cdots,N-L+1,\\
\sum\limits_{j=1}^{N+1-i}{\lambda_j^{1-\alpha}}/(N-i+1) &\text{    }i=N-L+2,\cdots,N.
\end{cases}
\end{align}
Thus, a closed-form formula for $\ell_1$ is given as
\begin{align}
\ell_1=1-(1,0,\cdots,0)\exp \left ( \gamma_2 \bold{A} \right ) (1,1,\cdots,1)',
\end{align}
with $\gamma_2=\left ( \gamma_{th}/\eta\right )^{\alpha}$. 
Finally, to sample $Y^{(1)},\cdots,Y^{(L)}$ from the truncated PDF over $S_1$, acceptance-rejection is again used and yields an algorithm similar to Algorithm 1.
\subsubsection{Efficiency}
The main result is provided as follows:
\begin{prop}
\hspace{2mm}  For $0<\alpha<1 $ and arbitrary values of $\lambda_k$, $k=1,\cdots,L$, we have
\begin{align}
\limsup_{\gamma_{th} \rightarrow 0}{\frac{\ell_1}{\ell}} < \infty.
\end{align}
Hence, the bounded relative error property holds.
\end{prop}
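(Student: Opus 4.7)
The plan is to mimic the Pareto proof in Proposition 2: namely, sandwich the ratio $\ell_1/\ell$ by producing an explicit lower bound on $\ell$ and an explicit upper bound on $\ell_1$, then use the equivalence $1-e^{-x}\sim x$ as $x\to 0$ to read off a finite limit.

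First, I would lower bound $\ell$ exactly as in the Pareto case. Since all $X_i$ are nonnegative and $X^{(k)}\leq \gamma_{th}/L$ for every $k=1,\ldots,L$ forces $\sum_{k=1}^L X^{(k)}\leq \gamma_{th}$, one has
\begin{align*}
\ell \;\geq\; P(X_1\leq \gamma_{th}/L,\ldots,X_N\leq \gamma_{th}/L)
\;=\; \Bigl[1-\exp\bigl(-(\gamma_{th}/(L\eta))^{\alpha}\bigr)\Bigr]^N,
\end{align*}
by i.i.d.\ and the Weibull CDF.

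Second, I would upper bound $\ell_1$ by discarding all but a single (positive) term in the constraint that defines $S_1$. Since each $\lambda_k^{1-\alpha} Y^{(k)}$ is nonnegative, the event in the definition of $\ell_1$ forces $\lambda_1^{1-\alpha} Y^{(1)}\leq (\gamma_{th}/\eta)^{\alpha}$, and hence
\begin{align*}
\ell_1 \;\leq\; P\!\left(Y^{(1)}\leq (\gamma_{th}/\eta)^{\alpha}/\lambda_1^{1-\alpha}\right)
\;=\; \Bigl[1-\exp\bigl(-(\gamma_{th}/\eta)^{\alpha}/\lambda_1^{1-\alpha}\bigr)\Bigr]^{N},
\end{align*}
using that $Y^{(1)}=\max_i Y_i$ with $Y_i$ i.i.d.\ standard exponential.

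Third, I would take the ratio of the two bounds and let $\gamma_{th}\to 0$. Using $1-e^{-x}\sim x$ for small $x>0$, both upper and lower bounds are equivalent to a constant multiple of $(\gamma_{th}/\eta)^{N\alpha}$, the $\gamma_{th}$-dependence cancels, and one obtains
\begin{align*}
\limsup_{\gamma_{th}\to 0}\frac{\ell_1}{\ell} \;\leq\; \frac{L^{N\alpha}}{\lambda_1^{N(1-\alpha)}} \;<\;\infty,
\end{align*}
which, together with \eqref{var_is}, gives bounded relative error. Notice that this constant is finite for any fixed $\lambda_1>0$, which is precisely what justifies the ``arbitrary values of $\lambda_k$'' statement: no optimization of $\lambda_k$ is needed (although in practice one would tune them to minimize $\ell_1$ and hence the variance).

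There is no real obstacle here; the proof is essentially a transcription of the Pareto argument, with the only subtle point being the choice of which single term to keep when upper bounding $\ell_1$. Keeping the $k=1$ term is the cleanest because $Y^{(1)}$ is the maximum and $P(Y^{(1)}\leq c)=(1-e^{-c})^N$ has the same functional form as the lower bound on $\ell$, making the cancellation of the $\gamma_{th}$-scaling transparent; any fixed $k_0\in\{1,\ldots,L\}$ would also work up to a different finite constant.
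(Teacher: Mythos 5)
Your proposal is correct and follows essentially the same route as the paper: the same lower bound $\ell \geq P(X_1\leq \gamma_{th}/L,\ldots,X_N\leq\gamma_{th}/L)$, the same upper bound on $\ell_1$ obtained by retaining only the $\lambda_1^{1-\alpha}Y^{(1)}$ term, and the same limiting constant $L^{\alpha N}/\lambda_1^{N(1-\alpha)}$.
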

\begin{proof}
We use the same steps as in the proof of bounded relative error for Pareto case. In fact, the value of $\ell_1$ satisfies
\begin{align}
\nonumber \ell_1 &\leq P \left ( Y^{(1)} \leq \left ( \gamma_{th}/\eta\right )^{\alpha} /\lambda_1^{1-\alpha} \right )\\
&= \left ( 1-\exp \left (- \left ( \gamma_{th}/\eta\right )^{\alpha} /\lambda_1^{1-\alpha} \right )\right )^{N}.
\end{align}
On the other hand, we have
\begin{align*}
\nonumber \ell &\geq P \left ( X^{(1)} \leq \gamma_{th}/L, \cdots, X^{(L)} \leq \gamma_{th}/L\right )\\
\nonumber &=P \left ( X_1 \leq \gamma_{th}/L, \cdots, X_N \leq \gamma_{th}/L\right )\\
&= \left ( 1-\exp \left (- \left ( \gamma_{th}/(L\eta)\right )^{\alpha}  \right )\right )^{N}.
\end{align*}
Thus, we get
\begin{align}
\limsup_{\gamma_{th} \rightarrow 0}{\frac{\ell_1}{\ell}} \leq \frac{L^{\alpha N}}{\lambda_1^{N(1-\alpha)}}.
\end{align}
\end{proof}
Note that, in contrast to the Pareto case where the bounded relative error property holds only for equal values of $\lambda_k$, $k=1,\cdots,L$, the bounded relative error holds in the Weibull case for arbitrarily values of $\lambda_k$ satisfying $\lambda_k>0$ and $\sum_{k=1}^{L}{\lambda_k}=1$. Thus, the values of $\lambda_k$ can be optimized in order to achieve the largest amount of variance reduction. In other words, we may select the values of $\lambda_k$ that minimize the value $\ell_1$ and hence minimize the variance of the estimator $\hat{\ell}_{IS}$. 

\section{Conditional MC Estimator}
The Log-normal distribution is an example for which the assumption in Proposition 1, required to ensure the bounded relative error, is not satisfied. However, we may easily prove in this case that the logarithmic efficiency is achieved by the universal IS estimator. Therefore, it would be important to construct a competitor estimator for the Log-normal case and investigate its efficiency with respect to the universal IS estimator. This is the objective of this section where we propose an alternative estimator of $\ell$ based on the use of conditional MC. In addition to the Log-normal distribution, this conditional MC estimator applies to the Generalized Gamma distribution. For each case, we present the conditional MC estimator along with its corresponding efficiency results.
\subsection{Generalized Gamma Case}
\subsubsection{The Approach}
We start by considering the particular Weibull case. This will facilitate the understanding of the approach in the Generalized Gamma case. From the expression of $\ell$ in (\ref{weib}), the idea of the conditional MC estimator is to use the fact that exponential RV $Y_i$ is equal in distribution to $G\times S_i$ where $G$ is a Gamma distribution with shape $N$ and scale $1$ and $\bold{S}=(S_1,\cdots,S_N)$ are uniformly distributed over the simplex $\{ s_i>0, \sum_{i=1}^{N}{s_i}=1\}$ and independent of $G$, see \cite{opac-b1132466}. Then, using this representation, the probability $\ell$ can be expressed as
\begin{align}
\ell = P \left ( G^{1/\alpha} \left [ \sum_{k=1}^{L}{ \left (S^{(k)} \right )^{1/\alpha}}\right ] \leq \gamma_{th}/\eta\right ).
\end{align}
Let $F_G(\cdot)$ be the CDF of the Gamma RV $G$. By conditioning on $S_1,S_2,\cdots, S_N$, we get
\begin{align}
\ell=\mathbb{E} \left [F_G \left (\frac{(\gamma_{th}/\eta)^\alpha}{\left [\sum_{k=1}^{L}{ \left ( S^{(k)}\right )^{1/\alpha}} \right ]^{\alpha}} \right ) \right ].
\end{align}
Therefore the conditional MC estimator is given by 
\begin{align}\label{cmc_weibull}
\hat{\ell}_{CMC}=F_G \left (\frac{(\gamma_{th}/\eta)^\alpha}{\left [\sum_{k=1}^{L}{ \left ( S^{(k)}\right )^{1/\alpha}} \right ]^{\alpha}} \right ).
\end{align}
The case of the Generalized Gamma distribution is essentially based on the same methodology as above. In fact, let $X_1,X_2,\cdots,X_N$ be a sequence of i.i.d  generalized Gamma RVs whose common PDF is given by
\begin{align}
f(x)=\frac{p/a^d x^{d-1} \exp \left (-\left (x/a \right )^p \right )}{\Gamma(d/p)}, \text{   }x>0.
\end{align} 
It can be easily shown that $Y^{1/p}$, where $Y$ is a Gamma distribution with scale parameter $a^p$ and shape parameter $d/p$, has the same Generalized Gamma distribution. Therefore, the probability $\ell$ is given by 
\begin{align}
\ell =P \left (\sum_{k=1}^{L}{ \left (Y^{(k)} \right )^{1/p}\leq \gamma_{th}} \right ).
\end{align}
Similarly to the Weibull case, we exploit the following representation of the Gamma RVs $Y_1,Y_2,\cdots,Y_N$, see\cite{opac-b1132466}
\begin{align}
Y_i=S_i V, \text{  } i=1,\cdots,N,
\end{align} 
where $\bold{S}=(S_1,\cdots,S_N)$ follows a Dirichlet distribution with parameters $(d/p,\cdots,d/p)$ and $V$ follows a gamma distribution with scale parameter $a^p$ and shape parameter $N d/p$. Note that $S$ and $V$ are independent. Hence, following this representation, the probability of interest can be expressed as 
\begin{align}
\ell= P \left (V^{1/p} \sum_{k=1}^{L}{ \left ( S^{(k)} \right )^{1/p}} \leq \gamma_{th}\right ).
\end{align}
Therefore, by conditioning on $\bold{S}$, it follows that
\begin{align}
\ell= \mathbb{E} \left [F_V \left (\frac{\gamma_{th}^{p}}{ \left ( \sum_{k=1}^{L}{ \left ( S^{(k)}\right )^{1/p}}\right )^p} \right ) \right ],
\end{align}
where $F_V (\cdot)$ is the CDF of the Gamma RV $V$ which is given by
\begin{align}
F_V(x)=\frac{\gamma(Nd/p,x/a^p)}{\Gamma(Nd/p)},
\end{align}
where $\Gamma(\cdot)$ and $\gamma(\cdot,\cdot)$ are respectively the Gamma and the lower incomplete Gamma functions \cite{gradshteyn2007}. Thus, the conditional MC estimator is
\begin{align}
\hat{\ell}_{CMC}=F_V \left (\frac{\gamma_{th}^{p}}{ \left ( \sum_{k=1}^{L}{ \left ( S^{(k)}\right )^{1/p}}\right )^p} \right ).
\end{align}
\subsubsection{Efficiency}
The efficiency of the conditional MC estimator is given in the following proposition. 
\begin{prop}
\hspace{2mm} The conditional MC estimator has bounded relative error for all $1<L \leq N$
\begin{align}
\limsup_{\gamma_{th} \rightarrow 0}{\frac{\mathbb{E} \left [F_V^2 \left (\frac{\gamma_{th}^{p}}{ \left ( \sum_{k=1}^{L}{ \left ( S^{(k)}\right )^{1/p}}\right )^p} \right ) \right ]}{\ell^2}} <\infty.
\end{align}
\end{prop}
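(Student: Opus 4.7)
The plan is to show that $F_V$ behaves like a pure power of its argument near zero, so that both $\ell$ and $\mathbb{E}[\hat{\ell}_{CMC}^2]$ factor into a power of $\gamma_{th}$ times expectations that depend only on the random weight $W := \bigl(\sum_{k=1}^{L}(S^{(k)})^{1/p}\bigr)^{p}$. The key structural observation is that $W$ is a bounded random variable that is uniformly bounded away from zero, so all of its negative moments are finite, and the ratio in the definition of bounded relative error stays controlled.

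First, I would use the series expansion $\gamma(s,x) = x^{s}/s + O(x^{s+1})$ for the lower incomplete Gamma function to write, as $x \to 0^+$,
\begin{align*}
F_V(x) = \frac{(x/a^{p})^{Nd/p}}{\Gamma(Nd/p + 1)}\bigl(1 + o(1)\bigr).
\end{align*}
Applied with $x = \gamma_{th}^{p}/W$, this gives $F_V(\gamma_{th}^{p}/W) = c\,\gamma_{th}^{Nd}\,W^{-Nd/p}(1+o(1))$ for a deterministic constant $c > 0$, where the $o(1)$ is uniform in $\omega$ provided $\gamma_{th}^{p}/W$ is uniformly small. That uniformity is exactly what the next step establishes.

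Second, I would bound $W$ pointwise. Because $S_{1},\dots,S_{N}$ lie on the simplex $\sum_{i} S_{i}=1$, the largest order statistic satisfies $S^{(1)} \geq 1/N$, so
\begin{align*}
W \;\geq\; \bigl((S^{(1)})^{1/p}\bigr)^{p} \;=\; S^{(1)} \;\geq\; 1/N.
\end{align*}
Conversely, since each $S^{(k)} \in [0,1]$ we have $(S^{(k)})^{1/p}\leq 1$, giving $W \leq L^{p}$. Thus $W$ takes values in the compact interval $[1/N,\,L^{p}]$ almost surely, and in particular $\gamma_{th}^{p}/W \leq N\gamma_{th}^{p} \to 0$ uniformly, legitimising the uniform use of the asymptotic expansion.

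Third, I would combine the two steps: for $\gamma_{th}$ sufficiently small, there exist constants $c_{-},c_{+}>0$ with
\begin{align*}
c_{-}\,\gamma_{th}^{Nd}\,W^{-Nd/p} \;\leq\; F_V(\gamma_{th}^{p}/W) \;\leq\; c_{+}\,\gamma_{th}^{Nd}\,W^{-Nd/p}.
\end{align*}
Taking expectations and squaring yields
\begin{align*}
\frac{\mathbb{E}[\hat{\ell}_{CMC}^{2}]}{\ell^{2}} \;\leq\; \frac{c_{+}^{2}}{c_{-}^{2}}\cdot\frac{\mathbb{E}[W^{-2Nd/p}]}{\bigl(\mathbb{E}[W^{-Nd/p}]\bigr)^{2}},
\end{align*}
and the bounds $W \in [1/N, L^{p}]$ immediately give $\mathbb{E}[W^{-2Nd/p}] \leq N^{2Nd/p}$ and $\mathbb{E}[W^{-Nd/p}] \geq L^{-Nd}$, so the right-hand side is finite and independent of $\gamma_{th}$.

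The main technical point to handle carefully is the uniformity of the expansion of $F_V$ over the whole support of $W$; everything else is a direct consequence of the simplex constraint on $\bold{S}$. Because those constraints yield almost-sure two-sided bounds on $W$, no delicate tail analysis of the Dirichlet variables is required, which is what makes the bounded-relative-error conclusion painless once the asymptotic for $F_V$ is in hand.
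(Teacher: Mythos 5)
Your proof is correct, and it takes a genuinely more unified route than the paper's. The paper splits the argument into four cases: it treats $p\geq 1$ via $\sum_k (S_k)^{1/p}\geq \sum_k S_k=1$ and $p<1$ via convexity of $x\mapsto x^{1/p}$, and it handles $L=N$ first and then $L<N$ via the averaging inequality $\frac{N}{L}\sum_{k=1}^{L}S^{(k)}\geq \sum_{k=1}^{N}S_k$; it then lower-bounds $\ell$ separately through $P(X_1\leq \gamma_{th}/N,\dots,X_N\leq\gamma_{th}/N)$ and the small-argument asymptotic of the incomplete Gamma function. You instead observe the single pointwise sandwich $1/N\leq S^{(1)}\leq W\leq L^{p}$, valid for every $p>0$ and every $1\leq L\leq N$, and work entirely inside the conditional representation $\ell=\mathbb{E}\bigl[F_V(\gamma_{th}^{p}/W)\bigr]$, using the uniform two-sided power-law sandwich on $F_V$ near zero. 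This eliminates all the case analysis, covers $L=1$ as a byproduct, and actually yields matching upper and lower bounds of order $\gamma_{th}^{2Nd}$ for the second moment, so you identify the limiting relative error as controlled by $\mathbb{E}[W^{-2Nd/p}]/\bigl(\mathbb{E}[W^{-Nd/p}]\bigr)^{2}$ rather than merely bounding it; the paper's version buys only the slightly weaker one-sided estimate but needs nothing beyond monotonicity of $F_V$ and one-sided asymptotics. The one point you flag yourself --- uniformity of the expansion of $F_V$ over the support of $W$ --- is indeed the only delicate step, and it is legitimate since $\gamma_{th}^{p}/W\leq N\gamma_{th}^{p}$ almost surely; alternatively you could sidestep uniformity altogether by using monotonicity of $F_V$ to replace $F_V(\gamma_{th}^{p}/W)$ with the deterministic bounds $F_V(\gamma_{th}^{p}/L^{p})\leq F_V(\gamma_{th}^{p}/W)\leq F_V(N\gamma_{th}^{p})$ before invoking the asymptotic.
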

\begin{proof}
In a first stage, we start by proving the result for $N=L$. Then, the extension to the general case will be straightforward. Let us first consider the case where $p \geq1$, the second moment of the conditional MC estimator is bounded by
\begin{align}
\nonumber &\mathbb{E} \left [ F_V^2 \left (\frac{\gamma_{th}^{p}}{ \left ( \sum_{k=1}^{N}{ \left ( S_k\right )^{1/p}}\right )^p} \right ) \right ] \\
&\leq F^2_V(\gamma_{th}^p)=\gamma^2(Nd/p,\gamma_{th}^p/a^p)/\Gamma^2(Nd/p),
\end{align}
where we have used the fact that $\sum_{k=1}^{N}{S_k}=1$. Via the use of the asymptotic behavior of the incomplete Gamma function, $\gamma(c,t) \sim c_1 t^c $ as $t \rightarrow 0$ \cite{gradshteyn2007}, we have, for a sufficiently small $\gamma_{th}$,
\begin{align}
\mathbb{E} \left [ F_V^2 \left (\frac{\gamma_{th}^{p}}{ \left ( \sum_{k=1}^{N}{ \left ( S_k\right )^{1/p}}\right )^p} \right ) \right ] \leq C_1 \gamma_{th}^{2Nd},
\end{align}
where $C_1$ is a constant independent of $\gamma_{th}$. On the other hand, the probability $\ell$ is lower bounded as follows
\begin{align}
\nonumber \ell & \geq P \left (X_1 \leq \gamma_{th}/N, \cdots, X_N \leq \gamma_{th}/N \right )\\
& =\left (\gamma(d/p,(\gamma_{th}/Na)^p)/\Gamma(d/p) \right )^N. 
\end{align}
Again, using the asymptotic behavior of the incomplete gamma function, we get for a sufficiently small values of $\gamma_{th}$
\begin{align}\label{ell_eq}
\ell \geq C_2 \gamma_{th}^{Nd},
\end{align}
where $C_2$ is a constant independent of $\gamma_{th}$. Hence, the bounded relative error property holds for $p \geq 1$. In the case where $p<1$, we use the convexity of the function $x \mapsto x^{1/p}$ to get
\begin{align}
\mathbb{E} \left [ F_V^2 \left (\frac{\gamma_{th}^{p}}{ \left ( \sum_{k=1}^{N}{ \left ( S_k\right )^{1/p}}\right )^p} \right ) \right ] \leq F_V^2 \left ( \gamma_{th}^p/N^{p-1}\right ).
\end{align}
Then, we use the same steps as in the case where $p \geq 1$ to conclude the proof. The proof of the bounded relative error for the case $1<L<N$ is straightforward from the above proof. Let us start with the case where $p \geq 1$. The second moment is bounded by 
\begin{align}
\nonumber & \mathbb{E} \left [ F_V^2 \left (\frac{\gamma_{th}^{p}}{ \left ( \sum_{k=1}^{L}{ \left ( S^{(k)}\right )^{1/p}}\right )^p} \right ) \right ] \\
&\leq \mathbb{E} \left [ F_V^2 \left (\frac{\gamma_{th}^{p}}{ \left ( \sum_{k=1}^{L}{  S^{(k)}}\right )^p} \right ) \right ].
\end{align}
Then, using the fact that $\frac{N}{L}\sum_{k=1}^{L}{S^{(k)}} \geq \sum_{k=1}^{N}{S_k}$. It follows that
\begin{align}
\mathbb{E} \left [ F_V^2 \left (\frac{\gamma_{th}^{p}}{ \left ( \sum_{k=1}^{L}{ \left ( S^{(k)}\right )^{1/p}}\right )^p} \right ) \right ] \leq F_V^2 \left ( N^p\gamma_{th}^p/L^p\right ).
\end{align}
The lower bound on $\ell$ in (\ref{ell_eq}) remains valid when $L<N$. Therefore the bounded relative error holds. In the case where $p<1$, we have, using again $\frac{N}{L}\sum_{k=1}^{L}{S^{(k)}} \geq \sum_{k=1}^{N}{S_k}$ and the convexity of the function $x \mapsto x^{1/p}$,
\begin{align}
\mathbb{E} \left [ F_V^2 \left (\frac{\gamma_{th}^{p}}{ \left ( \sum_{k=1}^{L}{ \left ( S^{(k)}\right )^{1/p}}\right )^p} \right ) \right ] \leq  F_V^2 \left ( N\gamma_{th}^p/L^p\right ),
\end{align}
and therefore the bounded relative error property holds again.
\end{proof}
\begin{rem}
\hspace{2mm} In the particular Weibull setting where $p=d=\alpha$, the conditional MC estimator does not impose a restriction on the value of $\alpha$ which can take any strictly positive value. This is in contrast with the IS estimator described in the previous section which assumes $\alpha$ to be between $0$ and $1$. 

It is also important to note that the conditional MC estimator has bounded relative error for $1<L \leq N$ when $p_i=p$, $a_i=a$, whereas $d_i$ is allowed to take arbitrary values. 
\end{rem}
\subsection{Log-normal Case}
\subsubsection{The Approach}
We consider a sequence $X_1,\cdots, X_N$ of i.i.d standard Log-normal RVs whose PDF is
\begin{align}
f(x)=\frac{1}{\sqrt{2\pi}x}\exp \left (- \left ( \log(x)\right )^2/2 \right ), \text{   }x>0.
\end{align}
Let $Y_1,\cdots,Y_N$ be the associated normal RVs with zero mean and unit variance. Then, the probability $\ell$ is expressed as
\begin{align}
\ell=P \left (\sum_{k=1}^{L}{\exp \left ( Y^{(k)} \right ) }\leq \gamma_{th}\right ).
\end{align}
The random vector $\bold{Y}=(Y_1,\cdots,Y_N)'$ can be decomposed as, see \cite{DBLP:conf/wsc/BlanchetJR08},
\begin{align}
\bold{Y}=R \bold{\Theta},
\end{align}
where $R$ is the Euclidean distance of $Y$ from the origin and $\bold{\Theta}$ is uniformly distributed over the surface of the N-dimensional Ball. Note that $R$ and $\bold{\Theta}$ are independent. Following this representation, the probability of interest is expressed as 
\begin{align}
\ell=P \left ( \sum_{k=1}^{L}{\exp \left (R \Theta^{(k)} \right )} \leq \gamma_{th}\right ).
\end{align}
We assume now that $\gamma_{th} \leq 1$. Note that as long as $\ell$ is not sufficiently small, when $\gamma_{th}>1$, the previous assumption is not restrictive since it can be efficiently handled using naive MC simulations. Under this assumption, we clearly observe that for a given realization in which one of the $\Theta_i$ is greater than $0$ then this realization will certainly not be in the set of interest $\{\sum_{k=1}^{L}{\exp \left (R \Theta^{(k)} \right )} \leq \gamma_{th} \}$. Hence, it more convenient to condition on the event $\{\max_{i}{\Theta_i} <0 \}$. More clearly, the probability $\ell$ is written as
\begin{align}
\nonumber \ell &= P \left ( \sum_{k=1}^{L}{\exp \left (R \Theta^{(k)} \right )} \leq \gamma_{th},\max_{i}{\Theta_i} <0  \right )\\
&+ \underbrace{P \left ( \sum_{k=1}^{L}{\exp \left (R \Theta^{(k)} \right )} \leq \gamma_{th},\max_{i}{\Theta_i} \geq 0  \right )}_{=0}.
\end{align}
Given that $P \left ( \max_{i}{\Theta_i} <0 \right )=1/2^N$ and by conditioning over $\bold{\Theta}|\max_i(\Theta_i) <0$, we get
\begin{align}
\nonumber \ell &=\frac{1}{2^N} P \left ( \sum_{k=1}^{L}{\exp \left (R \Theta^{(k)} \right )} \leq \gamma_{th}\Big{|}\max_{i}{\Theta_i} <0  \right )\\
&= \frac{1}{2^N} \mathbb{E}_{\bold{\Theta} \Big {|} \max_i(\Theta_i) <0} \left [ P \left ( \sum_{k=1}^{L}{\exp \left (R \Theta^{(k)} \right )} \leq \gamma_{th}\Big{|} \bold{\Theta}\right )\right ].
\end{align}
Given $\bold{\Theta}$, the function $r \mapsto \sum_{k=1}^{L}{\exp \left (r \Theta^{(k)} \right )}$ is decreasing 
and thus $\{r, \sum_{k=1}^{L}{\exp \left (r \Theta^{(k)} \right )} \leq \gamma_{th} \}=\{r, r> r(\bold{\Theta})\}$ 
where $r(\bold{\Theta})$ solves the non-linear equation $\sum_{k=1}^{L}{\exp \left (r(\bold{\Theta}) \Theta^{(k)} \right )}=\gamma_{th}$. Hence, we get, using the independence of $R$ and $\bold{\Theta}$,
\begin{align}
\ell=\frac{1}{2^N} \mathbb{E}_{\bold{\Theta} \Big{|} \max_i{\Theta_i}<0} \left [ 1-F_R \left ( r(\bold{\Theta})\right )\right ],
\end{align}
where $F_R(\cdot)$ is the CDF of $R$ which is given by
\begin{align}
F_R(r)=\frac{\gamma \left (N/2,r^2/2 \right )}{\Gamma (N/2)}.
\end{align}
Thus, the conditional MC estimator is given as
\begin{align}
\hat {\ell}_{CMC}=\frac{1}{2^N} \left [ 1-F_R \left ( r(\bold{\Theta})\right ) \right ],
\end{align}
where $\bold{\Theta}$ is uniformly distributed on the surface of the N-dimensional unit ball truncated over $\{\max_i{\Theta_i} <0\} $.

The implementation of the conditional MC estimator requires then sampling of $\bold{\Theta}$ truncated over $\{\max_i{\Theta_i} <0\} $. This can be easily performed using the following procedure. First we sample $Y_1,\cdots,Y_N$ independently from the standard Normal distribution, then we set $\Theta_i=- |Y_i|/|| \bold{Y}||_{2}$. It can be easily proven that the output of this procedure provides samples of $\bold{\Theta}$ with the desired distribution. 
\newline Regarding the quantity $r(\bold{\Theta})$ which is the solution of the non linear equation $\sum_{k=1}^{L}{\exp \left (r(\bold{\Theta}) \Theta^{(k)} \right )}=\gamma_{th}$, we approximate it via the use of the bisection method. To do that, we need to construct lower and upper bounds of $r(\bold{\Theta})$. Through a simple computation, we have the following inequality
\begin{align}
\frac{\log \left (\gamma_{th}/L \right )}{\Theta^{(L)}} \leq r(\bold{\Theta}) \leq \frac{\log \left (\gamma_{th}/L \right )}{\Theta^{(1)}}.
\end{align}
\subsubsection{Efficiency}
The following proposition provides an efficiency result of the condition MC estimator.
\begin{prop}
\hspace{2mm} The conditional MC estimator is logarithmic efficient for $1<L \leq N$. That is, for all $\epsilon>0$
\begin{align}
\lim_{\gamma_{th} \rightarrow 0} \frac{\mathrm{var} \left [ \hat{\ell}_{CMC}\right ]}{\ell^{2-\epsilon}}=0.
\end{align}
\end{prop}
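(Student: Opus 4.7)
The plan is to upper bound $\mathbb{E}[\hat{\ell}_{CMC}^2]$ by $\ell^{2-\epsilon}$, up to subexponential factors, for every fixed $\epsilon > 0$. Since $\mathrm{var}[\hat{\ell}_{CMC}] \leq \mathbb{E}[\hat{\ell}_{CMC}^2]$, this (together with the equivalent formulation of logarithmic efficiency given in the excerpt) proves the claim.

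The core step is a clean deterministic lower bound on $r(\boldsymbol{\Theta})$. All $L$ terms in $\sum_{k=1}^{L}\exp(r\Theta^{(k)})$ are positive, so the defining equation $\sum_k \exp(r(\boldsymbol{\Theta})\Theta^{(k)}) = \gamma_{th}$ immediately yields $\gamma_{th} \geq \exp\!\bigl(r(\boldsymbol{\Theta})\,\Theta^{(1)}\bigr)$. Taking logs and using $\Theta^{(1)} < 0$ gives $r(\boldsymbol{\Theta}) \geq |\log \gamma_{th}|/|\Theta^{(1)}|$. Because the ordering $\Theta^{(1)} \geq \cdots \geq \Theta^{(N)}$ with all entries negative is equivalent to $|\Theta^{(1)}| \leq \cdots \leq |\Theta^{(N)}|$, we have $|\Theta^{(1)}| = \min_i |\Theta_i|$, and hence $|\Theta^{(1)}|^2 \leq \tfrac{1}{N}\sum_i \Theta_i^2 = \tfrac{1}{N}$. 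Combining,
\begin{align*}
r(\boldsymbol{\Theta}) \geq \sqrt{N}\,|\log \gamma_{th}|
\end{align*}
holds uniformly on the truncated support of $\boldsymbol{\Theta}$.

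Next I would plug this into the upper tail of $F_R$. From the standard asymptotic $1 - F_R(r) = \Gamma(N/2,\,r^2/2)/\Gamma(N/2) \sim (r^2/2)^{N/2-1} e^{-r^2/2}/\Gamma(N/2)$ as $r \to \infty$, there is a constant $C$ with $(1-F_R(r))^2 \leq C\,r^{2N-4}\,e^{-r^2}$ for all $r$ sufficiently large. Substituting the uniform lower bound and using $\hat{\ell}_{CMC}^2 = 4^{-N}(1-F_R(r(\boldsymbol{\Theta})))^2$ yields
\begin{align*}
\mathbb{E}[\hat{\ell}_{CMC}^2] \leq C'\,|\log \gamma_{th}|^{2N-4}\,\exp\!\bigl(-N \log^2 \gamma_{th}\bigr).
\end{align*}
On the other side, $\ell \geq \Pr(X_1 \leq \gamma_{th}/L,\ldots, X_N \leq \gamma_{th}/L) = \Phi(\log(\gamma_{th}/L))^N$ (with $\Phi$ the standard normal CDF), and Mills' ratio gives $\log \ell = -\tfrac{N}{2}\log^2 \gamma_{th}\,(1+o(1))$. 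Dividing the two logarithms yields $\log \mathbb{E}[\hat{\ell}_{CMC}^2]/\log \ell \to 2N/N = 2$.

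The main obstacle is choosing the right lower bound for $r(\boldsymbol{\Theta})$: the inequality $r(\boldsymbol{\Theta}) \geq |\log(\gamma_{th}/L)|/|\Theta^{(L)}|$ stated just before the proposition is \emph{insufficient}, because the only uniform bound $|\Theta^{(L)}|^2 \leq 1/(N-L+1)$ produces decay rate $(N-L+1)\log^2 \gamma_{th}$ for the second moment, which matches $2|\log \ell| = N\log^2 \gamma_{th}$ only when $L = 1$. One must instead exploit the \emph{largest} (hence smallest in absolute value) component $\Theta^{(1)}$, which produces the sharp constant $N$. Once this observation is in place, the polynomial factor $|\log \gamma_{th}|^{2N-4}$ is harmless, being absorbed by $\ell^{-\epsilon} = \exp(\epsilon N \log^2 \gamma_{th}/2 + o(\log^2 \gamma_{th}))$ for any $\epsilon > 0$.
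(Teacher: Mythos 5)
Your proof is correct, but its key step takes a genuinely different route from the paper's. The paper lower-bounds $r(\boldsymbol{\Theta})$ by applying Jensen's inequality to the whole sum, $\sum_{k=1}^{N}\exp(r\Theta_k)\geq N\exp\bigl(\tfrac{r}{N}\sum_{k}\Theta_k\bigr)$, which gives $r(\boldsymbol{\Theta})\geq N\log(\gamma_{th}/N)/\sum_k\Theta_k$, and then uses Cauchy--Schwarz in the form $(\sum_k\Theta_k)^2\leq N$; for $1<L<N$ it must first insert the extra inequality $\tfrac{N}{L}\sum_{k=1}^{L}\exp(r\Theta^{(k)})\geq\sum_{k=1}^{N}\exp(r\Theta_k)$ and rerun the argument. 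You instead retain only the single dominant term $\exp(r\Theta^{(1)})$ and use the pigeonhole bound $\min_i|\Theta_i|\leq 1/\sqrt{N}$. Both routes yield the same leading-order uniform bound $r(\boldsymbol{\Theta})^2\geq N\log^2(\gamma_{th})\,(1+o(1))$, but yours handles all $1< L\leq N$ in one stroke and is more elementary. Your remark that the naive bound via $\Theta^{(L)}$ (the one suggested by the bisection bracket stated before the proposition) only produces the exponent $N-L+1$ and therefore fails for $L>1$ is accurate, and identifies exactly the trap that the paper's Jensen step is also designed to avoid. One downstream difference: for $L=N$ the paper invokes the sharp left-tail asymptotic of the lognormal sum, $\ell\sim C_4(\log(1/\gamma_{th}))^{-(1+N)/2}\gamma_{th}^{N\log N}\exp(-N(\log\gamma_{th})^2/2)$, which gives the stronger quantitative conclusion $\mathbb{E}[\hat{\ell}_{CMC}^2]/\ell^2=\mathcal{O}\bigl((\log(1/\gamma_{th}))^{3(N-1)}\bigr)$, whereas you use the cruder lower bound $\ell\geq\Phi(\log(\gamma_{th}/L))^N$ throughout; that is entirely sufficient for logarithmic efficiency, which is all the proposition asserts.
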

\begin{proof}
To facilitate the understanding of the proof, we start with the case where $L=N$. Let us first construct a lower bound of $r(\bold{\Theta})$. To do that, we use the convexity of the exponential function as follows
\begin{align}\label{conv}
\sum_{k=1}^{N}{\exp \left (r \Theta_k \right )} \geq \exp \left ( \sum_{k=1}^{N}{\frac{1}{N} (r \Theta_k+\log(N))}\right ).
\end{align}
Equating the right hand side to $\gamma_{th}$ yields the following lower bound of $r(\bold{\Theta})$
\begin{align}
r(\bold{\Theta}) \geq r_{lower}(\bold{\Theta})=\frac{N \log (\gamma_{th}/N)}{\sum_{k=1}^{N}{\Theta_k}}.
\end{align}
Hence, the second moment of the conditional MC estimator is upper bounded as follows
\begin{align}
\nonumber & \mathbb{E}_{\bold{\Theta} \Big{|} \max_i{\Theta_i}<0} \left [\hat{\ell}^2\right ] \\
& \leq \frac{1}{2^{2N}} \mathbb{E}_{\bold{\Theta} \Big{|} \max_i{\Theta_i}<0} \left [ \left (1-\frac{\gamma(N/2,\frac{N^2 (\log (\gamma_{th}/N))^2}{2(\sum_{k=1}^{N}{\Theta_k})^2})}{\Gamma(N/2)} \right )^2\right ].
\end{align}
Now using the fact that $(\sum_{k=1}^{N}{\Theta_k})^2 \leq N \sum_{k=1}^{N}{\Theta_k^2}=N$, we get that
\begin{align}
\mathbb{E}_{\bold{\Theta} \Big{|} \max_i{\Theta_i}<0} \left [\hat{\ell}^2\right ] \leq \frac{1}{2^{2N}} \left (1-\frac{\gamma(N/2,\frac{N(\log(\gamma_{th}/N))^2}{2})}{\Gamma(N/2)} \right )^2.
\end{align}
Through the use of the following asymptotic behavior, see \cite{abramowitz1964handbook},
\begin{align}
\Gamma(s)-\gamma(s,t) \sim  t^{s-1} \exp(-t) \text{,   as   }t\rightarrow +\infty,
\end{align}
we get for a sufficiently small $\gamma_{th}$ the following upper bound 
\begin{align}
\nonumber & \mathbb{E}_{\bold{\Theta} \Big{|} \max_i{\Theta_i}<0} \left [\hat{\ell}^2\right ] \\
& \leq C_3 \left ( \frac{N(\log(\gamma_{th}/N))^2}{2} \right )^{N-2} \exp \left (-N(\log(\gamma_{th}/N))^2 \right ),
\end{align}
where $C_3$ is a constant independent of $\gamma_{th}$. On the other hand, the probability $\ell$ has the following asymptotic behaviour  \cite{gulisashvili2016}:
\begin{align}
\ell \sim C_4 \left (\log (1/\gamma_{th}) \right )^{-\frac{1+N}{2}} \gamma_{th}^{N \log{N}} \exp \left ( -N \left (\log(\gamma_{th}) \right )^2/2\right ).
\end{align}
Therefore, we have for small enough $\gamma_{th}$
\begin{align}
\frac{\mathbb{E}_{\bold{\Theta} \Big{|} \max_i{\Theta_i}<0} \left [\hat{\ell}^2\right ]}{\ell^2} \leq C_5 \left (\log(1/\gamma_{th}) \right )^{3(N-1)}.
\end{align}
This in particular shows that the conditional MC estimator is logarithmic efficient.

Let us extend the proof to the case where $L<N$. Using the inequality $\frac{N}{L}\sum_{k=1}^{L}{\exp \left (r\Theta^{(k)} \right )} \geq \sum_{k=1}^{N}{\exp \left ( r\Theta_k\right )}$, we construct a lower bound of $r(\bold{\Theta})$ given by equating the right hand side of the previous inequality to $\frac{N}{L} \gamma_{th}$. Then, using the same idea as in (\ref{conv}), we get 
\begin{align}
r(\bold{\Theta}) \geq \frac{N \log (\gamma_{th}/L)}{\sum_{k=1}^{N}{\Theta_k}}.
\end{align}
Moreover, the probability $\ell$ is lower bounded by
\begin{align}
\ell \geq \left ( P \left (X_1 \leq \gamma_{th}/L \right )\right )^N.
\end{align}
Using the asymptotic behavior of the right hand side term given in \cite{DBLP:journals/anor/AsmussenBJR11} 
\begin{align}
P \left ( X_1 \leq \gamma_{th}/L\right ) \sim \frac{1}{\sqrt{2\pi}\log \left (L/\gamma_{th} \right )} \exp \left (-\frac{\left ( \log(\gamma_{th}/L)\right )^2}{2} \right )
\end{align}
and following the same steps as for the case $L=N$, we get
\begin{align}
\frac{\mathbb{E}_{\bold{\Theta} \Big{|} \max_i{\Theta_i}<0} \left [\hat{\ell}^2\right ]}{\ell^2} \leq C \left (\log(1/\gamma_{th}) \right )^{4(N-1)}.
\end{align}
Thus, the logarithmic efficiency holds for $L<N$ as well.
\end{proof}
\begin{rem}
\hspace{2mm} The logarithmic efficiency holds when $X_1,\cdots,X_N$ are i.i.d with parameters $\mu$ and $\sigma$. The proof is a simple modification of the above procedure. 
\end{rem}
\section{Numerical Results}
We provide in this section some selected simulations in order to validate the theoretical results and compare the efficiency of the proposed estimators. We define the relative error, i.e. the coefficient of variation using $M$ replicants, of an estimator $\hat{\ell}$ as
\begin{align}
RE(\hat{\ell})=\frac{\sqrt{\mathrm{var} \left [\hat{\ell} \right ]}}{\ell \sqrt{M}}.
\end{align} 
The simulations are performed for three cases: the Pareto, the Weibull, and the Log-normal distributions. Note that the universal IS estimator described in section III-A is denoted by $\hat{\ell}_{IS,u}$ whereas the IS estimators presented in section III-B and section III-C  are denoted by $\hat{\ell}_{IS}$.
\subsection{Pareto Case}
The system parameters in the Pareto case are as follows. The sequence $X_1,\cdots,X_N$ are i.i.d Pareto RVs with parameter $\alpha=1$. We aim to estimate the CDF of the sum of $L=4$ first order statistics with $N=8$  using the estimators $\hat{\ell}_{IS}$ and $\hat{\ell}_{IS,u}$. Note that the variance of $\hat{\ell}_{IS}$ and $\hat{\ell}_{IS,u}$ can be computed using sample variance or directly through the expression  $\ell_1\ell-\ell^2$ given in (\ref{var_is}). The corresponding results are given in Table \ref{tab1}.

\begin{table}[H]
\begin{center}
\caption{CDF of the sum of order statistics for Pareto Case with $N=8$, $L=4$, $\alpha=1$ and $M=5\times 10^5$.}\label{tab1}
\begin{tabular}{|c|c|c|c|c|}
\hline
\multicolumn{1}{c|}{} &\multicolumn{2}{c|}{IS estimator } & \multicolumn{2}{c|}{Universal IS estimator }\\
\hline
$\gamma_{th}$ & $\hat {\ell}_{IS}$ & $RE(\hat{\ell}_{IS}) \%$ & $\hat {\ell}_{IS,u}$ & $RE(\hat{\ell}_{IS,u}) \%$\\
  \hline
$1.5$  &  $2.21 \times 10^{-4}$   & $ 6.06 \times 10^{-2}$ & $2.19 \times 10^{-4}$ & $1.23$ \\
$1$     & $2.06 \times 10^{-5}$   &  $5.18 \times 10^{-2}$ & $2.11 \times 10^{-5}$ & $1.92$\\
$0.5$   &  $2.13 \times 10^{-7}$  & $3.85 \times 10^{-2}$ & $2.09 \times 10^{-7}$ & $3.82$\\
$0.1$   & $1.29 \times 10^{-12}$  &  $1.79 \times 10^{-2}$ & $1.29 \times 10^{-12}$ & $8.51$\\
\hline
\end{tabular}
\end{center}
\end{table} 
Numerical results show that the quantity $RE(\hat{\ell}_{IS})$ is decreasing as we decrease the threshold value $\gamma_{th}$. Hence, $\hat{\ell}_{IS}$ achieves numerically the asymptotically vanishing relative error property which is stronger than the theoretical result of bounded relative error proven in Proposition 2. Moreover, $\hat{\ell}_{IS}$ is much more efficient than $\hat{\ell}_{IS,u}$, which only achieves the bounded relative error as proved in Proposition 1, and the gain in performance is improving as we decrease the threshold values. Thus, while $\hat{\ell}_{IS,u}$ has the feature of being applicable to a wide range of distributions, its efficiency can be significantly improved for a particular choice of distribution.
\subsection{Weibull Case}
We consider the case where the sequence $X_1,\cdots,X_N$ are i.i.d Weibull RVs with parameter $\eta$ and $\alpha$ and we compare the performance of both IS estimators with the conditional MC one. In order to be able to use the IS estimator $\hat{\ell}_{IS}$ described in section III-C, we restrict our analysis to the case where $0<\alpha<1$. Note that we set $\lambda_k=1/L$, $k=1,\cdots,L$. The system parameters are $L=4$ , $N=8$, $\alpha=0.5$, and $\eta=1$. The corresponding results are given in Table \ref{tab2}

\begin{table}[H]
\centering
\caption{CDF of the sum of order statistics for Weibull Case with $N=8$, $L=4$, $\alpha=0.5$, $\eta=1$ and $M=5\times 10^5$.}\label{tab2}
\begin{tabular}{|c|c|c|c|c|c|c|}
\hline
\multicolumn{1}{c|}{} &\multicolumn{2}{c|}{IS estimator } &\multicolumn{2}{c|}{Universal IS estimator } &\multicolumn{2}{c|}{Conditional MC estimator }\\
\hline
$\gamma_{th}$ & $\hat {\ell}_{IS}$ & $RE(\hat{\ell}_{IS}) \%$& $\hat {\ell}_{IS,u}$ & $RE(\hat{\ell}_{IS,u}) \%$&  $\hat {\ell}_{CMC}$ & $RE(\hat{\ell}_{CMC}) \%$\\
\hline
$1$  &  $0.0029$  &  $9.96 \times 10^{-2}$ & $0.0029$ & $0.4$ &$0.0029$ & $0.12$\\
$0.5$ &  $3.37 \times 10^{-4}$ & $0.1$ & $3.37 \times 10^{-4}$& $0.49$ &$3.37 \times 10^{-4}$ & $0.13$\\
$0.1$  & $1.27 \times 10^{-6}$  & $0.11$ &$ 1.27 \times 10^{-6}$ & $0.66$ &$1.27 \times 10^{-6}$ & $0.15$\\
$0.05$  & $ 9.79 \times 10^{-8}$ & $0.11$ &$9.85 \times 10^{-8}$ & $0.71$ &$9.79 \times 10^{-8}$ & $0.16$\\
$0.01$  & $2.06 \times 10^{-10}$ & $0.11$ & $2.06 \times 10^{-10}$ & $0.8$& $2.07 \times 10^{-10}$ & $0.17$\\
$0.005$ & $1.38 \times 10^{-11}$ & $0.11$ & $1.39 \times 10^{-11}$& $0.81$& $1.38 \times 10^{-11}$ & $0.17$\\
\hline
\end{tabular}
\end{table} 
From the values of the relative error, we deduce  that the three estimators yield very accurate estimates of the unknown probability $\ell$. Moreover, we validate that they have bounded relative error which is in accordance with the theoretical results. Furthermore, the above results show that $\hat{\ell}_{IS}$ outperforms $\hat{\ell}_{IS,u}$ and $\hat{\ell}_{CMC}$. 

Let us now analyze the impact of $\alpha$ on the performance of these three estimators. To this end, we set $\alpha=0.8$ and we repeat the simulation using the same system parameters as above. The results are given in Table \ref{tab3}. We observe from these results that increasing $\alpha$ improves the efficiency of $\hat{\ell}_{IS}$ and $\hat{\ell}_{CMC}$ but has a negative effect on the estimator $\hat{\ell}_{IS,u}$. Moreover, we point out that increasing $\alpha$ results in increasing the efficiency of the $\hat{\ell}_{IS}$ compared to $\hat{\ell}_{CMC}$. This is consistent with the fact that for $\alpha=1$, $\hat{\ell}_{IS}$ has zero variance. 
\begin{table}[H]
\centering
\caption{CDF of the sum of order statistics for Weibull Case with $N=8$, $L=4$, $\alpha=0.8$, $\eta=1$ and $M=5\times 10^5.$}\label{tab3}
\begin{tabular}{|c|c|c|c|c|c|c|}
\hline
\multicolumn{1}{c|}{} &\multicolumn{2}{c|}{IS estimator }  &\multicolumn{2}{c|}{Universal IS estimator }  &\multicolumn{2}{c|}{Conditional MC estimator }\\
\hline
$\gamma_{th}$ & $\hat {\ell}_{IS}$ & $RE(\hat{\ell}_{IS}) \%$& $\hat {\ell}_{IS,u}$ & $RE(\hat{\ell}_{IS,u}) \%$&  $\hat {\ell}_{CMC}$ & $RE(\hat{\ell}_{CMC}) \%$\\
\hline
$1.03$ & $3.38 \times 10^{-4}$ & $5.42 \times 10^{-2}$ & $3.41 \times 10^{-4}$ & $1.28$&$3.37 \times 10^{-4}$  &  $0.1$\\
$0.38$ & $1.32 \times 10^{-6}$ & $5.45 \times 10^{-2}$ & $1.29 \times 10^{-6}$& $2.31$& $1.31 \times 10^{-6}$ & $0.12$\\
$0.09$ & $2.10 \times 10^{-10}$ & $5.56 \times 10^{-2}$& $2.22 \times 10^{10}$ & $3.20$& $2.10 \times 10^{-10}$ & $0.13$\\
$0.058$ & $1.35 \times 10^{-11}$ &  $5.59 \times 10^{-2}$ & $1.33 \times 10^{-11}$& $3.49$ &$1.35 \times 10^{-11}$ & $0.13$\\
\hline
\end{tabular}
\end{table} 
Finally, we investigate the impact of varying $L$. To this end, we provide in Table \ref{tab4} and Table \ref{tab5} the results when $L=2$ and $L=6$ while maintaining $N$ fixed. These tables show that the efficiency of the conditional MC estimator is improved as $L$ increases. However, increasing $L$ affects negatively the performance of $\hat{\ell}_{IS}$ and $\hat{\ell}_{IS,u}$. This in particular suggests to opt for $\hat{\ell}_{CMC}$ when $L$ is close to $N$ and for $\hat{\ell}_{IS}$ when $L$ is close to $1$. 
\begin{table}[H]
\centering
\caption{CDF of the sum of order statistics for Weibull Case with $N=8$, $L=2$, $\alpha=0.5$, $\eta=1$ and $M=5\times 10^5$.}\label{tab4}
\begin{tabular}{|c|c|c|c|c|c|c|}
\hline
\multicolumn{1}{c|}{} &\multicolumn{2}{c|}{IS estimator } &  \multicolumn{2}{c|}{Universal IS estimator } &\multicolumn{2}{c|}{Conditional MC estimator }\\
\hline
$\gamma_{th}$ & $\hat {\ell}_{IS}$ & $RE(\hat{\ell}_{IS}) \%$& $\hat {\ell}_{IS,u}$ & $RE(\hat{\ell}_{IS,u}) \%$& $\hat {\ell}_{CMC}$ & $RE(\hat{\ell}_{CMC}) \%$\\
\hline
$0.355$ & $3.38 \times 10^{-4}$ & $4.37 \times 10^{-2}$ &$3.37 \times 10^{-4}$ & $0.28$ &$3.39 \times 10^{-4}$  &  $0.2$\\
$0.07$ & $1.28 \times 10^{-6}$ & $4.41 \times 10^{-2}$ & $1.28 \times 10^{-6}$& $0.34$& $1.28 \times 10^{-6}$ & $0.25$\\
$0.0069$ & $2.03 \times 10^{-10}$ & $4.42 \times 10^{-2}$ & $2.04 \times 10^{-10}$ & $0.37$&  $2.03 \times 10^{-10}$ & $0.28$\\
$0.0035$ & $1.44 \times 10^{-11}$ &  $4.42 \times 10^{-2}$ &$1.45 \times 10^{-11}$ & $0.38$ &$1.44 \times 10^{-11}$ & $0.28$\\
\hline
\end{tabular}
\end{table} 

\begin{table}[H]
\centering
\caption{CDF of the sum of order statistics for Weibull Case with $N=8$, $L=6$, $\alpha=0.5$, $\eta=1$ and $M=5\times 10^5$.}\label{tab5}
\begin{tabular}{|c|c|c|c|c|c|c|}
\hline
\multicolumn{1}{c|}{} &\multicolumn{2}{c|}{IS estimator } & \multicolumn{2}{c|}{Universal IS estimator }& \multicolumn{2}{c|}{Conditional MC estimator }\\
\hline
$\gamma_{th}$ & $\hat {\ell}_{IS}$ & $RE(\hat{\ell}_{IS}) \%$& $\hat {\ell}_{IS,u}$ & $RE(\hat{\ell}_{IS,u}) \%$& $\hat {\ell}_{CMC}$ & $RE(\hat{\ell}_{CMC}) \%$\\
\hline
$0.55$ & $3.38 \times 10^{-4}$ & $0.17$ & $3.39 \times 10^{-4}$& $0.56$ &$3.39 \times 10^{-4}$  &  $9.88 \times 10^{-2}$\\
$0.11$ & $1.26 \times 10^{-6}$ & $0.18$ &$1.27 \times 10^{-6}$ & $0.79$&$1.26 \times 10^{-6}$ & $0.11$\\
$0.011$ & $2.02 \times 10^{-10}$ & $0.19$ &$2.01 \times 10^{-10}$ &$0.97$ &$2.02 \times 10^{-10}$ & $0.12$\\
$0.0055$ & $1.35 \times 10^{-11}$ &  $0.19$ &$1.36 \times 10^{-11}$ & $1$&$1.34 \times 10^{-11}$ & $0.12$\\
\hline
\end{tabular}
\end{table}

\subsection{Log-normal Case}
We consider the case where the sequence $X_1,\cdots,X_N$ are i.i.d Log-normal RVs with parameter $\mu$ and $\sigma$ and we aim to detect the region for which the conditional MC estimator $\hat{\ell}_{CMC}$ outperforms the universal IS estimator $\hat{\ell}_{IS,u}$ and vice versa. The simulation parameters are $L=4$, $N=8$, $\mu=0$ and $\sigma=2$. The corresponding results are in Table \ref{tab6}.

\begin{table}[H]
\begin{center}
\scriptsize
\caption{CDF of the sum of order statistics for Log-normal Case with $N=8$, $L=4$, $\mu=0$, $\sigma=2$ and $M=10^6$.}\label{tab6}
\begin{tabular}{|c|c|c|c|c|}
\hline
\multicolumn{1}{c|}{}& \multicolumn{2}{c|}{Universal IS estimator } &\multicolumn{2}{c|}{Conditional MC estimator } \\
\hline
$\gamma_{th}$ &  $\hat {\ell}_{IS,u}$ & $RE(\hat{\ell}_{IS,u}) \%$ & $\hat {\ell}_{CMC}$ & $RE(\hat{\ell}_{CMC}) \%$\\
\hline
$1$  & $8.31 \times 10^{-5}$ & $0.68$ & $8.31 \times 10^{-5}$ &  $0.34$\\
$0.5$ & $1.91 \times 10^{-6}$ & $1.27$ & $1.90 \times 10^{-6}$  &  $0.99$\\
$0.3$  & $7.07 \times 10^{-8}$ & $2.11$& $7.00 \times 10^{-8}$ & $2.10$\\
$0.15$  &$3.90 \times 10^{-10}$ &$4.37$ & $3.92 \times 10^{-10}$ & $5.41$\\
\hline
\end{tabular}
\end{center}
\end{table} 
This table reveals that both estimators yield accurate estimates in the considered range of probability values. More precisely, the values of the relative error in Table \ref{tab6} indicate that $10^6$ samples are sufficient to ensure a precise estimate of $\ell$ in the region between $10^{-10}$ and $10^{-5}$. Note also that for values of $\ell$ that are larger than approximately $7 \times 10^{-6}$, the conditional MC estimator has less variance than the universal IS one. However, as we decrease the threshold, the universal IS estimator becomes more efficient than the conditional MC estimator. 

In Table \ref{tab7} and Table \ref{tab8}, we vary $L$ in order to study its impact. The same conclusion as in the Weibull case are deduced. In fact, for fixed $N$, the closer is $L$ to $N$, the better (respectively the worse) is  the performance of the conditional MC estimator (respectively the universal IS estimator). This observation suggests to work with the conditional MC estimator when $L$ is close to $N$ and with the universal IS estimator when $L$ is close to $1$.
\begin{table}[H]
\begin{center}
\scriptsize
\caption{CDF of the sum of order statistics for Log-normal Case with $N=8$, $L=2$, $\mu=0$, $\sigma=2$ and $M=10^6$.}\label{tab7} 
\begin{tabular}{|c|c|c|c|c|}
\hline
\multicolumn{1}{c|}{} &\multicolumn{2}{c|}{Universal IS estimator }&\multicolumn{2}{c|}{Conditional MC estimator } \\
\hline
$\gamma_{th}$ &  $\hat {\ell}_{IS,u}$ & $RE(\hat{\ell}_{IS,u}) \%$ & $\hat {\ell}_{CMC}$ & $RE(\hat{\ell}_{CMC}) \%$\\
\hline
$0.65$    &  $8.26 \times 10^{-5} $ &  $0.31$    & $8.30 \times 10^{-5}$ &  $0.4$\\
$0.315$   & $1.85 \times 10^{-6}$  &   $0.45$  & $1.87 \times 10^{-6}$  &  $1.28$\\
$0.185$   & $ 6.72 \times 10^{-8}$   & $0.6 $&$7.02 \times 10^{-8}$ & $2.97$\\
$0.0908$  &   $3.78 \times 10^{-10}$&  $0.9$ & $3.77 \times 10^{10}$ & $7.43$\\
\hline
\end{tabular}
\end{center}
\end{table} 

\begin{table}[H]
\begin{center}
\scriptsize
\caption{CDF of the sum of order statistics for Log-normal Case with $N=8$, $L=8$, $\mu=0$, $\sigma=2$ and $M=10^6$.}\label{tab8} 
\begin{tabular}{|c|c|c|c|c|}
\hline
\multicolumn{1}{c|}{} &\multicolumn{2}{c|}{Universal IS estimator } & \multicolumn{2}{c|}{Conditional MC estimator } \\
\hline
$\gamma_{th}$ &  $\hat {\ell}_{IS,u}$ & $RE(\hat{\ell}_{IS,u}) \%$ & $\hat {\ell}_{CMC}$ & $RE(\hat{\ell}_{CMC}) \%$\\
\hline
$0.635$  & $1.91 \times 10^{-6}$  & $2.05$ & $1.94 \times 10^{-6}$ &  $0.69$\\
$0.386$ & $6.85 \times 10^{-8}$ &  $3.86$ & $6.98 \times 10^{-8}$  &  $1.29$\\
$0.195$  & $3.39 \times 10^{-10}$ & $9.95$ & $3.57 \times 10^{-10}$ & $2.89$\\
\hline
\end{tabular}
\end{center}
\end{table} 
\subsection{Improvement of the Conditional MC Estimator}
From the smoothness of the conditional MC estimator, it may be interesting to employ the Quasi MC method and investigate whether it leads to further improvement. We consider, as an example, the case of the Weibull distribution whose corresponding conditional MC estimator is given in (\ref{cmc_weibull}) and can be written as $\hat{\ell}_{CMC}=h_1(\bold{S})$. We use \cite[Algorithm 3.19 and 3.22]{opac-b1132466} in order to map a uniform RV $\bold{U}$ over the N-dimensional cube into the random vector $\bold{S}$ by $\bold{S}=h_2(\bold{U})$. With this transformation, the quantity $\ell$ can be expressed as
\begin{align}
\ell=\mathbb{E} \left [ h(\bold{U}) \right ]
\end{align}
with $h=h_1 \circ h_2$. The idea of quasi MC is to consider deterministic quasirandom points and estimate $\ell$ using the sample mean. The objective is to improve the convergence rate to $\mathcal{O} \left (M^{-1/2-\delta} \right )$  instead of $\mathcal{O} \left (M^{-1/2} \right )$ when using i.i.d uniform random points over the N-dimensional cube. However, in order to be able to estimate the error, we consider instead the randomized quasi MC method in which the quasi random points $U_i$, $i=1,\cdots, M$ are now random points, see \cite[Algorithm 2.3]{opac-b1132466} and \cite{owen1997}. Therefore, $\ell$ is estimated by
\begin{align}
\hat{\ell}_{RQMC}=\frac{1}{M}\sum_{i=1}^{M}{h(U_i)}
\end{align}
Since $U_i$ are dependent, we can not estimate the variance of  $\hat{\ell}_{RQMC}$ by sample variance. To overcome such a problem, we produce $m$ independent copies $\hat{\ell}_{RQMC,k}$, $k=1,\cdots,m$, of $\hat{\ell}_{RQMC}$ and estimate $\ell$ by the sample mean of these copies. An estimate of the variance of $\hat{\ell}_{RQMC}$ is then given by the sample variance of these $m$ copies. 

In Fig. \ref{fig_RQMC}, we plot the square root of the variance of $\hat{\ell}_{RQMC}$ as a function of $M$ as well as the MC error rate which is $\mathcal{O} \left (M^{-1/2} \right )$. This figure shows that the randomized quasi MC estimator has a better  rate of convergence than the MC method. In fact, through data fitting, the square root of the variance of $\hat{\ell}_{RQMC}$ decreases with a rate equal approximately to $-1$. Thus, this result ensures a further improvement of the conditional MC estimator in terms of computational effort. 

\begin{figure}[h]
\centering
\setlength\figureheight{0.34\textwidth}
\setlength\figurewidth{0.50\textwidth}
%
%
%
%
\begin{tikzpicture}
\scalefont{0.7}
\begin{loglogaxis}[%
width=\figurewidth,
height=\figureheight,
scale only axis,
every outer x axis line/.append style={darkgray!60!black},
every x tick label/.append style={font=\color{darkgray!60!black}},
xmin=1000, xmax=1000000,
xminorticks=true,
xlabel={M},
xmajorgrids,
xminorgrids,
every outer y axis line/.append style={darkgray!60!black},
every y tick label/.append style={font=\color{darkgray!60!black}},
ymin=1e-08, ymax=0.1,
yminorticks=true,
ylabel={Convergence Rate},
ymajorgrids,
yminorgrids,
grid style={solid},
legend style={at={(0.010856448509367,0.01188527036781)},anchor=south west,draw=darkgray!60!black,fill=white,align=left}]
\addplot [
color=black,
solid,
line width=1.0pt,
mark size=2.0pt,
mark=square,
mark options={solid},
]
coordinates{
 (1000,0.0316227766016838)(10000,0.01)(50000,0.00447213595499958)(75000,0.00365148371670111)(100000,0.00316227766016838)(500000,0.0014142135623731) 
};
\addlegendentry{MC error $M^{-1/2}$};
\addplot [
color=black,
solid,
line width=1.0pt,
mark size=2.0pt,
mark=o,
mark options={solid},
]
coordinates{
 (1000,2.5980904419116e-05)(10000,2.16443574404311e-06)(50000,3.99135377196282e-07)(75000,3.05218691275327e-07)(100000,2.3682855982152e-07)(500000,4.36567770113759e-08) 
};
\addlegendentry{$\sqrt{\mathrm{var} [\hat{\ell}_{RQMC}]}$};
\addplot [
color=black,
dashed,
line width=1.0pt,
]
coordinates{
 (1000,2.24959092672924e-05)(10000,2.22845337371268e-06)(50000,4.42759379716974e-07)(75000,2.94682629943518e-07)(100000,2.20751443287144e-07)(500000,4.38599134513659e-08) 
};
\addlegendentry{Linear Fitting};
\draw[thick,arrows={->}] (axis cs:1.2e05,1.7e-06) -- (axis cs:1.0e05,0.3e-06);
\node[black,below] at (axis cs:1.3e05,4.5e-06){{\small $\text{Slope}=-1.004$}};
\end{loglogaxis}
\end{tikzpicture}%
\caption{Convergence Rate of randomized Quasi MC estimator as a function of $M$.}
\label{fig_RQMC}
\end{figure}
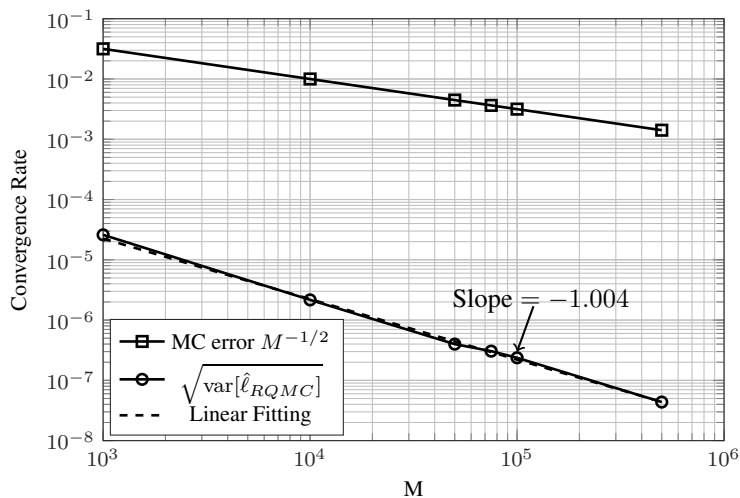

\section{Conclusion}
We developed in this paper two efficient variance reduction techniques in order to estimate the cumulative distribution function of the sum of order statistics. This applies to outage probability computation at the output of receivers with generalized selection combining scheme combined with either maximum ratio combining or equal gain combining diversity techniques. We first provided a universal importance sampling estimator and showed that it achieves the bounded relative error property for most of the well-practical distributions. Moreover, we showed how this approach can be improved if we settle for particular distributions. We also provided a conditional Monte Carlo estimator that has the bounded relative error in the Generalized Gamma case and the logarithmic efficiency in the Log-normal case. Moreover, we studied numerically the efficiency of these estimators and identified the regions in which each estimator performs better than the others. Finally, we showed numerically that the conditional Monte Carlo estimator can be further improved using the randomized quasi Monte Carlo method.  

\bibliography{References}

\begin{thebibliography}{10}
\providecommand{\url}[1]{#1}
\csname url@samestyle\endcsname
\providecommand{\newblock}{\relax}
\providecommand{\bibinfo}[2]{#2}
\providecommand{\BIBentrySTDinterwordspacing}{\spaceskip=0pt\relax}
\providecommand{\BIBentryALTinterwordstretchfactor}{4}
\providecommand{\BIBentryALTinterwordspacing}{\spaceskip=\fontdimen2\font plus
\BIBentryALTinterwordstretchfactor\fontdimen3\font minus
  \fontdimen4\font\relax}
\providecommand{\BIBforeignlanguage}[2]{{%
\expandafter\ifx\csname l@#1\endcsname\relax
\typeout{** WARNING: IEEEtran.bst: No hyphenation pattern has been}%
\typeout{** loaded for the language `#1'. Using the pattern for}%
\typeout{** the default language instead.}%
\else
\language=\csname l@#1\endcsname
\fi
#2}}
\providecommand{\BIBdecl}{\relax}
\BIBdecl

\bibitem{Yang:2011:OSW:2132708}
H.-C. Yang and M.-S. Alouini, \emph{Order Statistics in Wireless
  Communications: Diversity, Adaptation, and Scheduling in MIMO and OFDM
  Systems}, 1st~ed.\hskip 1em plus 0.5em minus 0.4em\relax New York, NY, USA:
  Cambridge University Press, 2011.

\bibitem{895025}
Y.~Ma and C.~C. Chai, ``Unified error probability analysis for generalized
  selection combining in {N}akagami fading channels,'' \emph{IEEE Journal on
  Selected Areas in Communications}, vol.~18, no.~11, pp. 2198--2210, Nov.
  2000.

\bibitem{7328688}
{N. Ben Rached and A. Kammoun and M.-S. Alouini and R. Tempone}, ``Unified
  importance sampling schemes for efficient simulation of outage capacity over
  generalized fading channels,'' \emph{IEEE Journal of Selected Topics in
  Signal Processing}, vol.~10, no.~2, pp. 376--388, Mar. 2016.

\bibitem{5605378}
S.~S. Nam, M.-S. Alouini, and H.~C. Yang, ``An {MGF}-based unified framework to
  determine the joint statistics of partial sums of ordered random variables,''
  \emph{IEEE Transactions on Information Theory}, vol.~56, no.~11, pp.
  5655--5672, Nov. 2010.

\bibitem{5784177}
S.~S. Nam, M.~O. Hasna, and M.-S. Alouini, ``Joint statistics of partial sums
  of ordered exponential variates and performance of {GSC} {RAKE} receivers
  over {R}ayleigh fading channel,'' \emph{IEEE Transactions on Communications},
  vol.~59, no.~8, pp. 2241--2253, Aug. 2011.

\bibitem{7953495}
S.~S. Nam, Y.~C. Ko, and M.-S. Alouini, ``New closed-form results on ordered
  statistics of partial sums of {G}amma random variables and its application to
  performance evaluation in the presence of {N}akagami fading,'' \emph{IEEE
  Access}, vol.~5, pp. 12\,820--12\,832, 2017.

\bibitem{6596692}
M.-S. Alouini and M.~K. Simon, ``Application of the {D}irichlet transformation
  to the performance evaluation of generalized selection combining over
  {N}akagami-m fading channels,'' \emph{Journal of Communications and
  Networks}, vol.~1, no.~1, pp. 5--13, Mar. 1999.

\bibitem{6292935}
I.~S. Ansari, F.~Yilmaz, M.-S. Alouini, and O.~Kucur, ``On the sum of {G}amma
  random variates with application to the performance of maximal ratio
  combining over nakagami-m fading channels,'' in \emph{in Proc. of the IEEE
  13th International Workshop on Signal Processing Advances in Wireless
  Communications (SPAWC)}, Cesme, Turkey, Jun. 2012, pp. 394--398.

\bibitem{4231253}
M.~D. Yacoub, ``The $\kappa-\mu$ distribution and the $\eta-\mu$
  distribution,'' \emph{IEEE Antennas and Propagation Magazine}, vol.~49,
  no.~1, pp. 68--81, Feb. 2007.

\bibitem{1275712}
N.~Beaulieu and Q.~Xie, ``An optimal {L}ognormal approximation to {L}ognormal
  sum distributions,'' \emph{IEEE Transactions on Vehicular Technology},
  vol.~53, no.~2, pp. 479--489, {M}ar. 2004.

\bibitem{4814351}
M.~Di~Renzo, F.~Graziosi, and F.~Santucci, ``Further results on the
  approximation of {L}og-normal power sum via {P}earson type {IV} distribution:
  a general formula for log-moments computation,'' \emph{IEEE Transactions on
  Communications}, vol.~57, no.~4, pp. 893--898, {A}pr. 2009.

\bibitem{1665128}
J.~Filho and M.~Yacoub, ``Simple precise approximations to {W}eibull sums,''
  \emph{IEEE Communications Letters}, vol.~10, no.~8, pp. 614--616, {A}ug.
  2006.

\bibitem{1388722}
J.~Hu and N.~Beaulieu, ``Accurate simple closed-form approximations to
  {R}ayleigh sum distributions and densities,'' \emph{IEEE Communications
  Letters}, vol.~9, no.~2, pp. 109--111, {F}eb. 2005.

\bibitem{asmussen2016exponential}
S.~Asmussen, J.~L. Jensen, and L.~Rojas-Nandayapa, ``Exponential family
  techniques for the lognormal left tail,'' \emph{Scandinavian Journal of
  Statistics}, vol.~43, no.~3, pp. 774--787, Sep. 2016.

\bibitem{gulisashvili2016}
A.~Gulisashvili and P.~Tankov, ``Tail behavior of sums and differences of
  log-normal random variables,'' \emph{Bernoulli}, vol.~22, no.~1, pp.
  444--493, 2016.

\bibitem{botev_SLN}
Z.~Botev, R.~Salomone, and D.~MacKinlay, ``Accurate computation of the
  distribution of sums of dependent log-normals with applications to the
  black-scholes model,'' \emph{arXiv preprint arXiv:1705.03196}, 2017.

\bibitem{Nadhir_SLN}
M.-S. Alouini, N.~{Ben Rached}, A.~Kammoun, and R.~Tempone, ``On the efficient
  simulation of the left-tail of the sum of correlated {L}og-normal variates,''
  \emph{arXiv preprint arXiv:1705.07635}, 2017.

\bibitem{Stuber:2001:PMC:368633}
G.~L. St\"{u}ber, \emph{Principles of Mobile Communication, 2nd Edition.}\hskip
  1em plus 0.5em minus 0.4em\relax Norwell, MA, USA: Kluwer Academic
  Publishers, 2001.

\bibitem{journals/twc/NavidpourUK07}
S.~M. Navidpour, M.~Uysal, and M.~Kavehrad, ``{BER} performance of free-space
  optical transmission with spatial diversity.'' \emph{IEEE Transactions on
  Wireless Communications}, vol.~6, no.~8, pp. 2813--2819, {A}ug. 2007.

\bibitem{1512431}
N.~Sagias and G.~Karagiannidis, ``Gaussian class multivariate weibull
  distributions: theory and applications in fading channels,'' \emph{IEEE
  Transactions on Information Theory}, vol.~51, no.~10, pp. 3608--3619, {O}ct.
  2005.

\bibitem{opac-b1132466}
D.~P. Kroese, T.~Taimre, and Z.~I. Botev, \emph{Handbook of Monte Carlo
  methods}.\hskip 1em plus 0.5em minus 0.4em\relax N.J: Wiley, 2011.

\bibitem{opac-b1123521}
S.~Asmussen and P.~W. Glynn, \emph{Stochastic simulation : algorithms and
  analysis}, ser. Stochastic modelling and applied probability.\hskip 1em plus
  0.5em minus 0.4em\relax New York: Springer, 2007.

\bibitem{7769235}
N.~{Ben Rached}, A.~Kammoun, M.~S. Alouini, and R.~Tempone, ``A unified
  moment-based approach for the evaluation of the outage probability with noise
  and interference,'' \emph{IEEE Transactions on Wireless Communications},
  vol.~16, no.~2, pp. 1012--1023, Feb 2017.

\bibitem{Botev:2013:SNR:2466677.2466683}
Z.~I. Botev, P.~L'Ecuyer, G.~Rubino, R.~Simard, and B.~Tuffin, ``Static network
  reliability estimation via generalized splitting,'' \emph{INFORMS J. on
  Computing}, vol.~25, no.~1, pp. 56--71, Jan. 2013.

\bibitem{gradshteyn2007}
I.~S. Gradshteyn and I.~M. Ryzhik, \emph{Table of integrals, series, and
  products}, 7th~ed.\hskip 1em plus 0.5em minus 0.4em\relax Elsevier/Academic
  Press, Amsterdam, 2007.

\bibitem{DBLP:conf/wsc/BlanchetJR08}
J.~H. Blanchet, S.~Juneja, and L.~Rojas{-}Nandayapa, ``Efficient tail
  estimation for sums of correlated {L}ognormals,'' in \emph{in Proc. od the
  Winter Simulation Conference, Miami, Florida, USA, Dec.}, 2008, pp. 607--614.

\bibitem{abramowitz1964handbook}
M.~Abramowitz and I.~Stegun, \emph{Handbook of Mathematical Functions: With
  Formulas, Graphs, and Mathematical Tables}, ser. Applied mathematics
  series.\hskip 1em plus 0.5em minus 0.4em\relax Dover Publications, 1964.

\bibitem{DBLP:journals/anor/AsmussenBJR11}
S.~Asmussen, J.~H. Blanchet, S.~Juneja, and L.~Rojas{-}Nandayapa, ``Efficient
  simulation of tail probabilities of sums of correlated {L}ognormals,''
  \emph{Annals {OR}}, vol. 189, no.~1, pp. 5--23, 2011.

\bibitem{owen1997}
A.~B. Owen, ``Scrambled net variance for integrals of smooth functions,''
  \emph{Ann. Statist.}, vol.~25, no.~4, pp. 1541--1562, 1997.

\end{thebibliography}
\bibliographystyle{IEEEtran}
\end{document}